\title{\huge \textbf{Unbiased estimation in two-stage adaptive enrichment designs}}

\author{Enyu Li$^{1\ast}$, Nigel Stallard$^{1}$, Ekkehard Glimm$^{2}$, and Peter K. Kimani$^{1}$ \\[8pt]
\textit{$^1$Clinical Trials Unit, University of Warwick, Coventry, U.K.} \\
\textit{$^2$Advanced Methodology and Data Science, Novartis Pharma AG, Basel, Switzerland}
\\[8pt]
$^\ast$Email: \href{mailto:enyu.li@warwick.ac.uk}{Enyu.Li@warwick.ac.uk}}

\date{ }

\documentclass[hidelinks, 11pt]{article}

\usepackage[margin=0.78in]{geometry}
\usepackage{hyperref}
\usepackage{amssymb, amsmath}
\usepackage{aliascnt}
\usepackage{graphicx}
\usepackage{booktabs}
\usepackage{multirow}
\usepackage{bm}
\usepackage{caption}
\usepackage{subcaption}
\usepackage{natbib}
\usepackage{pdfpages}
\usepackage{comment}
\usepackage{romannum} 
\usepackage{amsthm}
\usepackage{dsfont}
\usepackage{bbm}
\usepackage{enumitem}
\usepackage{tikz}
\usepackage{float}
\usepackage{array}
\usepackage{makecell}
\usepackage[nameinlink, capitalise]{cleveref}
\usepackage{xcolor}

\definecolor{myblue}{RGB}{0, 114, 178}
\definecolor{myred}{RGB}{213, 94, 0}
\definecolor{mygreen}{RGB}{0, 158, 115}
\definecolor{myindigo}{RGB}{63, 81, 181}
\definecolor{mycoral}{RGB}{255, 112, 67}
\definecolor{mygold}{RGB}{255, 202, 40}
\definecolor{myteal}{RGB}{38, 166, 154}
\definecolor{mypurple}{RGB}{126, 87, 194}
\definecolor{myorange}{RGB}{255, 167, 38}

\newcolumntype{C}[1]{>{\centering\arraybackslash}m{#1}}

\newtheorem{theorem}{Theorem}

\newaliascnt{lemma}{theorem}

\aliascntresetthe{lemma}

\newaliascnt{corollary}{theorem}
\newtheorem{corollary}[corollary]{Corollary}
\aliascntresetthe{corollary}

\newaliascnt{definition}{theorem}
\newtheorem{definition}[definition]{Definition}
\aliascntresetthe{definition}

\crefname{theorem}{Theorem}{Theorems}
\crefname{lemma}{Lemma}{Lemmas}
\crefname{corollary}{Corollary}{Corollaries}
\crefname{definition}{Definition}{Definitions}
\crefname{equation}{Equation}{Equations}
\crefname{figure}{Figure}{Figures}

\newcommand{\E}{\mathbb{E}}

\newcommand{\pr}[1]{\textup{(}\!#1\/\textup{)}}
\newcommand{\dif}{\,\mathrm{d}}

\hypersetup{
	colorlinks=true,  
	citecolor=blue,        
	linkcolor=black,        
	urlcolor=black          
}

\usetikzlibrary{arrows.meta,
	arrows,
	chains,
	matrix,
	positioning,
	scopes,
	shapes.geometric,
	shadows.blur,
	shapes.symbols,
	decorations.pathreplacing
}

\tikzstyle{block} = [rectangle, draw, fill=white!20, text width=8em, text centered, rounded corners, minimum height=3.5em, line width=0.3mm]
\tikzstyle{decision} = [rectangle, draw, fill=white!20, text width=8em, text centered, rounded corners, node distance=1.5cm and -0.3cm, line width=0.3mm, minimum height=3em]
\tikzstyle{line} = [draw, -Latex]

\begin{document}
\pagenumbering{arabic}
\maketitle

\begin{abstract}

Recent advances in biomedical research have identified an increasing number of biomarkers associated with heterogeneity in patient responses to medical treatments. When a treatment is suspected to benefit certain patient subpopulations, adaptive enrichment designs may be more efficient and ethical. In such designs, an interim analysis is incorporated during the trial to select patient subpopulations for which the experimental treatment appears promising, according to predefined subpopulation selection rules. However, data-dependent selection can induce selection bias, causing conventional maximum likelihood estimators (MLEs) to overestimate the treatment effect in the selected patient subgroup. Existing inference methods for addressing this bias are typically rule-specific, highlighting the need for an estimation framework that accommodate a broader class of subpopulation selection rules. In this work, we define a general class of subpopulation selection rules based on the sample space partition condition and provide a systematic derivation that yields a unified formula for the Uniformly Minimum Variance Conditional Unbiased Estimator (UMVCUE). This generality allows our formulation to encompass a wide spectrum of adaptive enrichment designs, eliminating the necessity for case-specific derivations for each new design. Extensive simulations confirm the unbiasedness of the proposed UMVCUE, ensuring that therapeutic benefits are not overestimated. By bridging the gap between flexible interim subpopulation selection and rigorous statistical inference, our framework has the potential to facilitate the implementation of diverse subpopulation selection rules with greater ease in real-world trials and promote more efficient and ethical drug development.

\end{abstract}

\hspace{-1em} {\em Key words}: Adaptive enrichment design; Point estimation; Precision medicine; Seamless phase \Romannum{2}/\Romannum{3} design; Subgroup analysis, Subpopulation selection.

\section{Introduction} \label{sec:intro}

Recent developments in biomedical science have revealed an increasing number of biomarkers associated with heterogeneity in patient responses to medical treatments. For example, metastatic breast cancer patients whose tumours overexpress human epidermal growth factor receptor-2 (HER2) have been reported to experience greater benefit from HER2-targeted therapies \citep{baselga2001herceptin, capelan2013pertuzumab}. Treatment heterogeneity, however, need not be driven solely by genetic biomarkers. It may also be related to other baseline patient characteristics; for instance, \citet{kirsch2008initial} reported in a meta-analysis that the benefit of certain antidepressants appeared to be concentrated among patients with more severe depression at baseline. To account for such potential treatment heterogeneity, it is often more efficient and ethical to employ adaptive enrichment designs.  In such designs, an interim analysis is incorporated during the trial to select patient subpopulations for which the experimental treatment appears promising based on accrued data. Subsequently, enrolment is restricted to the selected subgroups. A prominent real-world application of adaptive enrichment design is the Early Minimally Invasive Removal of Intracerebral Hemorrhage (ENRICH) trial \citep{pradilla2024trial}.

In adaptive enrichment designs, subpopulation selection rules should ideally preserve the broadest possible beneficial patient group. For example, an optimal rule should proceed with the overall population if a general benefit is observed in stage 1, but adapt to a specific subpopulation when the treatment effect is localized. Various interim subpopulation selection rules have been proposed in the literature \citep{brannath2009confirmatory, wang2009adaptive, jenkins2011adaptive, rosenblum2011optimizing, rosenblum2013confidence, magnusson2013group,kimani2015estimation, rosenblum2015adaptive, kunzmann2017point, kimani2018point, simon2018using, kimani2020point, stefano2022comparison}.  However, such data-dependent adaptations to enrolment can introduce significant statistical challenges in drawing inference regarding treatment effect in the selected patient subpopulation. For instance, standard hypothesis testing procedures which ignore the adaptive nature of the design often leads to type I error inflation. Moreover, conventional maximum likelihood estimators (MLEs) often overestimate the treatment effect, and conventional confidence intervals may fail to achieve nominal coverage probabilities, due to selection bias induced by the population adaption.

\citet{european2007reflection} emphasised that, to use an adaptive design, statistical methods that control type I error and yield correct estimates for the treatment effect must be available. To address the inferential challenge in adaptive enrichment designs, an increasing number of statistical methods have been proposed, including \citet{brannath2009confirmatory, jenkins2011adaptive, rosenblum2011optimizing, magirr2013simultaneous, magnusson2013group, rosenblum2013confidence, kimani2015estimation, rosenblum2015adaptive,robertson2016accounting, kunzmann2017point, kimani2018point, simon2018using, kimani2020point}, and \citet{freidling2025selective}. For comprehensive and detailed reviews of statistical methods for general adaptive designs involving treatment selection, subpopulation selection, and early stopping, see \citet{bretz2009adaptive, robertson2023point, robertson2025confidence}. Currently, there are hypothesis testing procedures \citep{rosenblum2011optimizing, magirr2013simultaneous, rosenblum2015adaptive} controlling the family-wise type I error rate and confidence intervals methods \citep{magirr2013simultaneous, li2026confidence} guaranteeing the nominal coverage for adaptive enrichment designs with a general class of interim subpopulation selection rules, i.e., the applicability of these methods do not limit to design with certain subpopulation selection rules. In contrast, the majority of proposed point estimators, including \citet{kimani2015estimation, robertson2016accounting, kunzmann2017point, kimani2018point} and \cite{stefano2022comparison}, are tailored to specific subpopulation selection rules. This lack of general applicability may create barriers to adopting enrichment designs in practice. For example, to employ a new subpopulation selection rule, an unbiased estimator must be derived since estimators from current studies are rule-specific. Hence, there remains a pressing need for an estimation framework that can encompass a broader class of subpopulation selection rules.

A principled way to adjust for selection bias and obtain an unbiased estimator is to apply Rao-Blackwell theorem and Lehmann–Scheffé theorem. Given an unbiased estimator $U$ of the unknown parameter of interest $\theta$ and a complete sufficient statistic $T$, these two theorems imply that the conditional expectation $\E[U \mid T]$ is the uniformly minimum variance unbiased estimator (UMVUE) of $\theta$. In adaptive designs, however, it may be more appropriate to seek unbiased estimators conditional on the adaptations that have occurred. From this conditional perspective, \citet{cohen1989} provided the first derivation of the uniformly minimum variance conditional unbiased estimator (UMVCUE) in the setting of a two-stage adaptive design with treatment selection.

 Taking this conditional estimation perspective, we develop a general framework for deriving uniformly minimum variance conditional unbiased estimators (UMVCUEs) in two-stage adaptive enrichment designs. A key component of this framework is the definition of a broad class of subpopulation selection rules, which characterizes the stage 1 sample space partition induced by the interim selection decision and encompasses many selection rules proposed in the current literature. Rather than providing rule-specific estimators, our approach gives a unified and readily implementable expression of UMVCUEs for this entire class of designs.
 
\section{Problem Definition} \label{sec:problem}

\subsection{Setting and notation} \label{sec:notation}

Suppose the overall patient population $\mathcal{F}$ can be partitioned into $k$ disjoint subpopulations $\mathcal{S}_1, \dots, \mathcal{S}_k$, indexed by $\mathcal{K} = \{1,\dots, k\}$. For each subpopulation $\mathcal{S}_m \; (m \in \mathcal{K})$, let $p_m$ denote the proportion of patients from subpopulation $\mathcal{S}_m$ in the overall population. We assume $p_m$ is known for each $m \in \mathcal{K}$. For any arbitrary subset of indices $\mathcal{I} \subseteq \mathcal{K}$, we define the corresponding combined subpopulation $\mathcal{S}_\mathcal{I} = \bigcup_{m \in \mathcal{I}} \mathcal{S}_m$ and denote its proportion in the overall population by $p_\mathcal{I} = \sum_{m \in \mathcal{I}} p_m$.

We focus on adaptive enrichment designs with two stages. In stage 1, $n_1$ patients are enrolled from the overall population using stratified sampling based on the prevalence $p_m$ of each subpopulation, such that, for each $m \in \mathcal{K}$, $n_1p_m$ patients are enrolled from $\mathcal{S}_m$ in stage 1. At the interim analysis, a pre-specified subpopulation selection rule is used to identify a selected subpopulation $\mathcal{S}_{\mathcal{E}}$ with a promising observed stage 1 treatment effect estimate, where $\mathcal{E} \subseteq \mathcal{K}$ denotes the index set of individual subpopulations selected for stage two. In stage 2, $n_2$ patients are enrolled from $\mathcal{S}_\mathcal{E}$, with subpopulation sample size proportional to $p_m/p_\mathcal{E}$ for each $\mathcal{S}_m \; (m \in \mathcal{E})$. Within each enrolled subpopulation during a given stage, we assume that half of the patients are randomized to the treatment arm and half to the control arm. For the selected subpopulation $\mathcal{S}_{\mathcal{E}}$, let $r_{\mathcal{E}} = n_1 p_{\mathcal{E}}/n_2$ denote the ratio of the stage 1 sample size in $\mathcal{S}_{\mathcal E}$ to the stage 2 sample size. More generally, for any subpopulation $\mathcal{S}_{\mathcal{I}}$ with $\mathcal{I} \subseteq \mathcal{E}$, let $r_{\mathcal{I}} = n_1 p_{\mathcal I}/(n_2p_{\mathcal I}/ p_{\mathcal E}) $ denote the corresponding stage 1 to stage 2 sample size ratio. It follows that $r_{\mathcal{I}} = n_1p_{\mathcal E}/n_2  = r_{\mathcal{E}}$ for all $\mathcal{I} \subseteq \mathcal{E}$. We assume $n_1$ and $n_2$ are fixed at the beginning of the trial. In the confirmatory analysis, we are interested in the treatment effect in the selected subpopulation.

For each patient $i \in \{1, \dots, n_1+n_2\}$, we collect the following data, $(M_i, J_i, A_i, Y_i)$, where $M_i \in \mathcal{K}$ denotes the subpopulation index, $J_i \in \{1,2\}$ is the stage when the patient is enrolled, $A_i \in \{0,1\}$ is the treatment arm indicator ($0$ for control, $1$ for treatment), and $Y_i \in \mathbb{R}$ is the outcome. For a patient in subpopulation $\mathcal{S}_m$ assigned to treatment arm $a$, we assume that $Y_i \sim N(\mu_{ma}, \sigma^2)$, with common known variance $\sigma^2$. Let $\Delta_m = \mu_{m1} - \mu_{m0}$ be the true treatment effect in $\mathcal{S}_m$. For any combined subpopulation $S_\mathcal{I}$, the aggregated treatment effect is defined as $\Delta_{\mathcal{I}} = \sum_{m \in \mathcal{I}} p_m \Delta_m / p_\mathcal{I}$.

\begin{figure}[!htbp]
	\makebox[\linewidth][c]{
		\hspace*{-1cm}%
		\begin{tikzpicture}
			\node at (6, 6) {\large \textbf{Population partition}};
			
			\draw[line width=1pt] (0,4) -- (12,4);
			
			\draw[decorate,decoration={brace,amplitude=10pt}]
			(0,4.5) -- (12,4.5) node[midway,above=10pt] {Overall population $\mathcal{F}$};
			
			\foreach \x in {0,2,4,6,8,10,12} {
				\draw[line width=0.7pt] (\x,4) -- (\x,4.2);
			}
			
			\node at (9,4.15) {$\cdots$};
			\node at (5,4.15) {$\cdots$};
			
			\draw[decorate,decoration={brace,mirror,amplitude=8pt}]
			(0,3.75) -- (1.9,3.75) node[midway,below=10pt] {$\mathcal{S}_1$};
			\draw[decorate,decoration={brace,mirror,amplitude=8pt}]
			(2.1,3.75) -- (4,3.75) node[midway,below=10pt] {$\mathcal{S}_2$};
			\draw[decorate,decoration={brace,mirror,amplitude=8pt}]
			(6,3.75) -- (8,3.75) node[midway,below=10pt] {$\mathcal{S}_m$};
			\draw[decorate,decoration={brace,mirror,amplitude=8pt}]
			(10,3.75) -- (12,3.75) node[midway,below=10pt] {$\mathcal{S}_k$};
			\node at (9,3.1) {$\cdots$};
			\node at (5,3.1) {$\cdots$};
			
			\node at (-1,3.1) {Subpopulation};
			
			\node at (-1,2.45) {Index};
			\node at (1,2.4) {$1$};
			\node at (3,2.4) {$2$};
			\node at (5,2.35) {$\cdots$};
			\node at (7,2.4) {$m$};
			\node at (11,2.4) {$k$};
			\node at (9,2.35) {$\cdots$};
			
			\node at (-1,1.75) {Prevalence};
			\node at (1,1.7) {$p_1$};
			\node at (3,1.7) {$p_2$};
			\node at (5,1.65) {$\cdots$};
			\node at (7,1.7) {$p_m$};
			\node at (11,1.7) {$p_k$};
			\node at (9,1.65) {$\cdots$};
			
			\node at (-1,1) {Treatment effect};
			\node at (1.05, 1) {$\Delta_1$};
			\node at (3.05,1) {$\Delta_2$};
			\node at (5.05,0.95) {$\cdots$};
			\node at (7.05,1) {$\Delta_m$};
			\node at (11.05,1) {$\Delta_k$};
			\node at (9,0.95) {$\cdots$};

			\node at (6, -0.3) {\large \textbf{Stage 1 enrolment}};
			
			\draw[line width=1pt] (0,-1.2) -- (12,-1.2);
			
			\foreach \x in {0,2,4,6,8,10,12} {
				\draw[line width=0.7pt] (\x,-1.2) -- (\x,-1);
			}
			
			\node at (5,-1.05) {$\cdots$};
			\node at (9,-1.05) {$\cdots$};
			
			\draw[decorate,decoration={brace,mirror,amplitude=8pt}]
			(0,-1.45) -- (1.9,-1.45) node[midway,below=10pt] {$\mathcal{S}_1$};
			\draw[decorate,decoration={brace,mirror,amplitude=8pt}]
			(2.1,-1.45) -- (3.9,-1.45) node[midway,below=10pt] {$\mathcal{S}_2$};
			\draw[decorate,decoration={brace,mirror,amplitude=8pt}]
			(6.1,-1.45) -- (8,-1.45) node[midway,below=10pt] {$\mathcal{S}_m$};
			\draw[decorate,decoration={brace,mirror,amplitude=8pt}]
			(10,-1.45) -- (12,-1.45) node[midway,below=10pt] {$\mathcal{S}_k$};
			\node at (5,-2.2) {$\cdots$};
			\node at (9,-2.2) {$\cdots$};
			
			\node at (-1,-2.1) {Subpopulation};
			
			\node at (-1,-2.8) {Sample size};
			\node at (1.05,-2.85) {$n_1p_1$};
			\node at (3.05,-2.85) {$n_1p_2$};
			\node at (7.05,-2.85) {$n_1p_m$};
			\node at (11.05,-2.85) {$n_1p_k$};
			\node at (5,-2.9) {$\cdots$};
			\node at (9,-2.9) {$\cdots$};
			
			\node at (-1,-3.5) {Sample mean};
			\node at (1.05,-3.55) {$X_{1,1}$};
			\node at (3.05,-3.55) {$X_{1,1}$};
			\node at (7.05,-3.55) {$X_{m,1}$};
			\node at (11.05,-3.55) {$X_{k,1}$};
			\node at (5,-3.6) {$\cdots$};
			\node at (9,-3.6) {$\cdots$};
			
			\draw[-stealth, line width=0.7pt] (6,-4) -- (6,-4.4);
			
			\node at (6, -5) {\large \textbf{Interim subpopulation selection: $\mathcal{S}_\mathcal{E}$}};
			
			\draw[-stealth, line width=0.7pt] (6,-5.6) -- (6,-6.1);
			
			\node at (6, -6.7) {\large \textbf{Stage 2 enrolment}};
			\node at (4.5, -7.4) {Subpopulation};
			\node at (7.5, -7.4) {$\mathcal{S}_m\; (m \in \mathcal{E})$};
			\node at (4.5, -8.1) {Sample size};
			\node at (7.5, -8.1) {$n_2 (p_m/ p_\mathcal{E})$};
			\node at (4.5, -8.8) {Sample mean};
			\node at (7.5, -8.8) {$X_{m,2}$};
			
		\end{tikzpicture}
	}
	\caption{Illustration of the population partition, enrolment process, and notation}
	\label{fig:population}
\end{figure}

\subsection{Definition of the statistics} \label{sec:statistics}

For a subpopulation $\mathcal{S}_\mathcal{I} \; (\mathcal{I} \subseteq \mathcal{K})$ and a given stage $j$, provided that patients from each $\mathcal{S}_m \; (m \in \mathcal{I})$ are enrolled in stage $j$, we define the stage-wise sample mean difference between treatment and control as
\begin{equation*}
	X_{\mathcal{I}, j} = \frac{\sum_{\{i:M_i \in\mathcal{I}, J_i = j, A_i = 1\}}Y_i}{\vert \{i:M_i \in \mathcal{I}, J_i = j, A_i = 1\} \vert} - \frac{\sum_{\{i:M_i \in \mathcal{I}, J_i = j, A_i = 0\}}Y_i}{\vert \{i:M_i \in \mathcal{I}, J_i = j, A_i = 0\} \vert } .
\end{equation*}
Let $v^2_{\mathcal{I}, j} = 4 \sigma^2 / \vert \{i:M_i \in \mathcal{I}, J_i = j\} \vert$ denote the variance of $X_{\mathcal{I}, j}$. We have $X_{\mathcal{I}, j} \sim N(\Delta_{\mathcal{I}}, v^2_{\mathcal{I}, j})$. For $\mathcal{S}_\mathcal{I}$, if patients from each $\mathcal{S}_m \; (m \in \mathcal{I})$ are enrolled in both stages, we define the maximum likelihood estimator (MLE) for the treatment effect using the pooled data in both stages as
\begin{equation}
	X_{\mathcal{I}} = \frac{\sum_{\{i:M_i \in\mathcal{I}, A_i = 1\}}Y_i}{\vert \{i:M_i \in \mathcal{I}, A_i = 1\} \vert} - \frac{\sum_{\{i:M_i \in \mathcal{I}, A_i = 0\}}Y_i}{\vert \{i:M_i \in \mathcal{I}, A_i = 0\} \vert }.
	\label{MLE}
\end{equation}
Let $v^2_{\mathcal{I}} = 4 \sigma^2 / \vert \{i:M_i \in \mathcal{I}\} \vert$ denote the variance of $X_{\mathcal{I}}$. We have $X_{\mathcal{I}} \sim N(\Delta_{\mathcal{I}}, v^2_{\mathcal{I}})$. For notational simplicity, for an individual subpopulation $\mathcal{S}_m$, we write $X_{m, j}$ and $X_{m}$ instead of  $X_{\{m\}, j}$ and $X_{\{m\}}$, respectively, throughout the paper. We also write $X_{\mathcal{F}, j}$ and $X_{\mathcal{F}}$ instead of  $X_{\mathcal{K}, j}$ and $X_{\mathcal{K}}$. \cref{fig:population} summarises the population partition, enrolment process, and notation.

For any subpopulation $\mathcal{S}_{\mathcal I}$, we define $\boldsymbol{X}_{\mathcal I,1} = (X_{m,1})_{m \in \mathcal I}$ as the vector of stage 1 sample mean differences, and $\bm{\Delta}_{\mathcal I} = (\Delta_m)_{m \in \mathcal I}$ as the vector of true treatment effects, for the individual subpopulations comprising $\mathcal{S}_{\mathcal I}$. In both vectors, the elements are arranged in increasing order of the indices $m$. Throughout the paper, bold symbols are used to distinguish vectors from scalars.

For any arbitrary subset of indices $\mathcal{I} \subseteq \mathcal{K}$, we denote by $\mathcal{I}^{\prime}$ the complement of $\mathcal{I}$ in $\mathcal{K}$, so that $\mathcal{I}^{\prime} = \mathcal{K} \setminus \mathcal{I}$. Specifically, $\boldsymbol{X}_{\mathcal{E}^\prime, 1}$ denotes the stage 1 sample mean differences in all unselected individual subpopulations at the interim analysis.

\subsection{Interim subpopulation selection rule} \label{sec:class}
To formalize interim subpopulation selection, we define the pre-specified selection rule $D$ as a mapping from the stage 1 sample mean differences to a subset of $\mathcal{K}$, such that
\begin{equation*}
	D \colon \mathbb{R}^k \to \mathcal{P}(\mathcal{K}), \; \boldsymbol{x}_{\mathcal{K}, 1} \mapsto D(\boldsymbol{x}_{\mathcal{K}, 1}) = \mathcal{E},
\end{equation*}
where $\mathcal{P}(\mathcal{K})$ is the power set of $\mathcal{K}$, namely the set of all subsets of $\mathcal{K}$; $\boldsymbol{x}_{\mathcal{K}, 1}$ represents the vector of observed stage 1 sample mean differences in all $k$ individual subpopulations; and $\mathcal{E}$ denotes the index set of the selected subpopulation. This mapping dictates that enrolment in stage 2 will be enriched and restricted to the combined subpopulation $\mathcal{S}_\mathcal{E}$. Specifically, $D(\boldsymbol{x}_{\mathcal{K}, 1}) = \varnothing$ indicates that the trial stops for futility after stage 1.

Building on the class defined in \citet{li2026confidence}, we introduce a closely related class of interim subpopulation selection rules, denoted by $\mathcal{D}$, for which the theoretical results in \cref{sec:umvcue} provide a general derivation of the uniformly minimum variance conditional unbiased estimator (UMVCUE). We emphasise that $\mathcal{D}$ is introduced to characterize the class of selection rules $D$ for which our conditional point estimation framework is applicable, rather than as a prescriptive template for constructing such rules. For a given subpopulation selection rule $D$, let $\text{Im}(D) = \{ D(\boldsymbol{x}_{\mathcal{K}, 1}) : \boldsymbol{x}_{\mathcal{K}, 1} \in \mathbb{R}^K \}$ denote the image of $D$, which represents the set of all achievable selection outcomes. Subpopulation selection rules in $\mathcal{D}$ satisfy the following stage 1 sample space partition condition: for each non-empty realization of $\mathcal{E}$ where $\mathcal{E} \in \text{Im}(D) \setminus \{\varnothing\}$, there exist scalar-valued random thresholds $L_D(\mathcal{E}, \boldsymbol{X}_{\mathcal{E}^\prime, 1})$ and $U_D(\mathcal{E}, \boldsymbol{X}_{\mathcal{E}^\prime, 1})$ $(L_D(\mathcal{E}, \boldsymbol{X}_{\mathcal{E}^\prime, 1})  < U_D(\mathcal{E}, \boldsymbol{X}_{\mathcal{E}^\prime, 1}))$ such that
\begin{equation}
	\{\boldsymbol{X}_{\mathcal{K}, 1}: D(\boldsymbol{X}_{\mathcal{K}, 1}) =\mathcal{E}\} = \{\boldsymbol{X}_{\mathcal{K},1}: L_D(\mathcal{E}, \boldsymbol{X}_{\mathcal{E}^\prime, 1}) < X_{\mathcal{E}, 1} < U_D(\mathcal{E}, \boldsymbol{X}_{\mathcal{E}^\prime, 1})\}.
	\label{class:def}
\end{equation}
For a given subpopulation selection rule $D$, $L_D(\mathcal{E}, \boldsymbol{X}_{\mathcal{E}^\prime, 1})$ and $U_D(\mathcal{E}, \boldsymbol{X}_{\mathcal{E}^\prime, 1})$ are functionally determined by the realized index set $\mathcal{E}$ and the vector of stage 1 sample mean differences in all unselected subpopulations $\boldsymbol{X}_{\mathcal{E}^\prime, 1}$. $L_D(\mathcal{E}, \boldsymbol{X}_{\mathcal{E}^\prime, 1})$ and $U_D(\mathcal{E}, \boldsymbol{X}_{\mathcal{E}^\prime, 1})$ may also take values in $\overline{\mathbb{R}} = \mathbb{R} \cup \{-\infty,+\infty\}$. The inequalities may be taken as either strict or non-strict, as the distinction does not affect the validity of the proposed method. In general, under a selection rule $D \in \mathcal{D}$, a combined subpopulation $\mathcal{S}_\mathcal{E}$ is selected if and only if its stage 1 sample mean difference $X_{\mathcal{E}, 1}$ lies within the interval determined by $L_D(\mathcal{E}, \boldsymbol{X}_{\mathcal{E}^\prime, 1})$ and $U_D(\mathcal{E}, \boldsymbol{X}_{\mathcal{E}^\prime, 1})$. \cref{fig:class} provides a geometric illustration of the stage 1 sample space partition induced by the selection rule within class $\mathcal{D}$.

\begin{figure}[!htbp]
	\begin{center}
		\begin{tikzpicture}[>=stealth]
			\fill[black!5] 
			plot [smooth, tension=0.5] coordinates {(-4,4) (-2,1.5) (-3,0) (-1,-1.5) (-1,-4)} --
			(-1,-4) -- (4,-4) --
			plot [smooth, tension=0.5] coordinates {(4,-4) (2,-1.5) (2.25,0) (4,2) (0,4)};
			
			\node[black!60] at (-1.5, 3.5) {$D(\boldsymbol{X}_{\mathcal{K}, 1}) = \mathcal{E}$};
			
			\draw[myblue!80, thick, smooth, tension=0.5] plot coordinates { (-4,4) (-2,1.5) (-3,0) (-1,-1.5) (-1,-4) };
			\node[myblue] at (-0.75, 1.8) {$L_D(\mathcal{E}, \boldsymbol{X}_{\mathcal{E}^\prime, 1})$};
			
			\draw[myred!80, thick, smooth, tension=0.5] plot coordinates { (0,4) (4,2) (2.25,0) (2,-1.5) (4,-4) };
			\node[myred] at (2.5, 1.8) {$U_D(\mathcal{E}, \boldsymbol{X}_{\mathcal{E}^\prime, 1})$};
			
			\draw[->] (-5,0) -- (5,0) node[right] {$X_{\mathcal{E}, 1}$};
			
			\draw[black, thick, dashed] (5,-1.5) -- (-5,-1.5) node[below right] {$\boldsymbol{X}_{\mathcal{E}^\prime, 1}= \boldsymbol{x}_{\mathcal{E}^\prime, 1}$};
			\fill[myblue] (-1,-1.5) circle (2pt) node[above right] {$l_D(\mathcal{E}, \boldsymbol{x}_{\mathcal{E}^\prime, 1})$};
			\fill[myred] (2,-1.5) circle (2pt) node[above right] {$u_D(\mathcal{E}, \boldsymbol{x}_{\mathcal{E}^\prime, 1})$};
			
		\end{tikzpicture}
	\end{center}
	\caption{A geometric illustration of the stage 1 sample space partition characterized by class $\mathcal{D}$. The shaded region represents the stage 1 sample space where the subpopulation $\mathcal{S}_\mathcal{E}$ is selected, i.e., $D(\boldsymbol{X}_{\mathcal{K}, 1}) = \mathcal{E}$. For a fixed realization of $\mathbf{x}_{\mathcal{E}^\prime,1}$ (dashed line), the selection event is reduced to the interval $(l_D, u_D)$ for $X_{\mathcal{E},1}$.}
	\label{fig:class}
\end{figure}

The class $\mathcal{D}$ encompasses a broad spectrum of subpopulation selection rules commonly employed in adaptive enrichment designs, including those established in the literature \citep{rosenblum2011optimizing, rosenblum2013confidence, kimani2015estimation, robertson2016accounting, kimani2018point, kimani2020point}. For any selection rule in this class, the estimation framework proposed in \cref{sec:umvcue} can be used, through \cref{theorem}, to derive the UMVCUE of the treatment effect in the selected subpopulation.

Notably, although certain selection rules do not strictly conform to class $\mathcal{D}$, including those proposed by \citet{jenkins2011adaptive}, \citet{magnusson2013group}, \citet{rosenblum2015adaptive}, \citet{kunzmann2017point}, \citet{stefano2022comparison}, and \citet{freidling2025selective}, the sample space partition perspective proposed in this work remains applicable. By leveraging this perspective and the methodology presented in \cref{corollary2}, one can still obtain UMVCUEs for each individual subpopulation within the selected subpopulation and integrate a conditional unbiased estimator of the selected subpopulation, though this may not necessarily satisfy the UMVCUE property.

\section{Examples of subpopulation selection rules in class $\mathcal{D}$}

\subsection{A subpopulation selection rule for two subpopulations case} \label{sec:design1}

We first consider the two subpopulations case ($\mathcal{K} = \{1,2\}$) and the proposed subpopulation selection rule $D_1$ in \citet{li2026confidence} as illustrated in \cref{fig:decision1}. If the observed stage 1 sample mean difference in the overall population $\mathcal{F}$ exceeds the pre-specified threshold $\Delta_*$, the trial continues with $\mathcal{F}$ in stage 2. Otherwise, if the larger of the stage 1 sample mean differences in $\mathcal{S}_1$ and $\mathcal{S}_2$ exceeds $\Delta_*$, the trial enriches to the corresponding subpopulation. If none of these stage 1 sample mean differences exceeds $\Delta_*$, the trial stops for futility. \cref{fig:decision2} visualises the stage 1 sample space partition induced by $D_1$. 

\begin{figure}[!htbp]
	\begin{center}
		\begin{tikzpicture}[node distance=1.5cm and 0cm, auto]

			\node (start) [decision] {$X_{\mathcal{F}, 1}> \Delta_*$};
			\node (outcome1) [block, below left=3.55cm and 0cm of start] {Enrol from $\mathcal{F}$};
			\node (dec2) [decision, below right=1cm and 0cm of start, text width=15em] {$\max(X_{1,1}, X_{2, 1}) > \Delta_*$};
			\node (outcome2) [block, below left=1.4cm and -3.5cm of dec2, text width=12em] {Enrol from $\mathcal{S}_{\arg\max_{m \in \{1,2\}}X_{m, 1}}$};
			\node (outcome3) [block, below right=1.4cm and -1cm of dec2] {Stop} ;
			
			\path [line] (start.-135) -- node [midway, left] {Yes \hspace{0.5em}} (outcome1.90);
			\path [line] (start.-45) -- node [pos=0.25, right] {\hspace{1.5em}  No} (dec2.90);
			\path [line] (dec2.-135) -- node [midway, left] {Yes \hspace{0.5em}} (outcome2.90);
			\path [line] (dec2.-25) -- node [midway, right] {\hspace{0.5em} No} (outcome3.90);
		\end{tikzpicture}
	\end{center}
	\caption{Schematic diagram of subpopulation selection rule $D_1$}
	\label{fig:decision1}
\end{figure}

\begin{figure}[htbp]
	\centering
	\begin{minipage}[c]{0.58\textwidth}
		\centering
			\begin{tikzpicture}[>=stealth, scale=1]
				\small
				\fill[myblue!15] (0,2) -- (2,0) -- (5,5) -- cycle;
				\fill[myblue!15] (2,0) -- (5,-3) -- (5,5) -- cycle;
				\fill[myblue!15] (0,2) -- (5,5) -- (-3,5) -- cycle;
				\fill[myred!20] (1,1) -- (1,-3) -- (5,-3) -- cycle;
				\fill[myindigo!20] (1,1) -- (-3,1) -- (-3,5) -- cycle;
				\fill[black!10] (-3,1) -- (1,1) -- (1,-3) -- (-3,-3) -- cycle;
				
				\draw[->] (-3,0) -- (5,0) node[right] {$X_{1, 1}$};
				\draw[->] (0,-3) -- (0,5) node[above] {$X_{2, 1}$};
				\draw[->, dashed] (-3,-3) -- (5,5) node[above] {$X_{\mathcal{F}, 1}$};
				
				\node[below left] at (0,0) {};
				
				\draw[black, thick, dashed] (5,-3) -- (-3,5) node[right] {};
				\draw[black, thick, dashed] (1,1) -- (1,-3) node[right] {};
				\draw[black, thick, dashed] (-3,1) -- (1,1) node[right] {};
				
				\fill (2,0) circle (1.5pt) node[above right] {\scriptsize $(\Delta_*/p_1,0)$};
				\fill (0,2) circle (1.5pt) node[above right] {\scriptsize $(0,\Delta_*/p_2)$};
				\fill (1,0) circle (1.5pt) node[below] {};
				\node at (1.1, 0) [below left] {\scriptsize $(\Delta_*,0)$};
				\fill (0,1) circle (1.5pt) node[above left] {\scriptsize $(0,\Delta_*)$};
				\fill (1,1) circle (1.5pt) node[right] {\scriptsize $(\Delta_*,\Delta_*)$};
				
				\node at (2, 4) {Continue with $\mathcal{F}$};
				\node at (2.5, -2.25) {Enrich to $\mathcal{S}_1$};
				\node at (-1.75, 2.35) {Enrich to $\mathcal{S}_2$};
				\node at (-2, -0.75) {Stop};
				
			\end{tikzpicture}
    \end{minipage}
	\begin{minipage}[c]{0.4\textwidth}
		\raggedright
			\begin{tikzpicture}[>=stealth, scale=1.1]
				\small
				\fill[myblue!15] (2,4) -- (5,4) -- (5,2) -- (2,2) -- cycle;
				\draw[->] (1,3) -- (5,3) node[right] {$X_{\mathcal{F}, 1}$};
				\node at (3.5, 3.75) {$\mathcal{E} = \{1,2\}, \; \mathcal{S}_\mathcal{E} = \mathcal{F}$};
				\node at (0.5, 4) {Continue with $\mathcal{F}$};
				
				\fill[myred!20] (2,1.5) -- (2,-0.5) -- (5,-0.5) -- cycle;
				\draw[->] (1,0.5) -- (5,0.5) node[right] {$X_{1, 1}$};
				\node at (3.4, -0.35) {$\mathcal{E} = \{1\}, \; \mathcal{S}_\mathcal{E} = \mathcal{S}_1$};
				\node at (0.5, 1.5) {Enrich to $\mathcal{S}_1$};
				
				\fill[myindigo!20] (2,-1) -- (2,-3) -- (5,-3) -- cycle;
				\draw[->] (1,-2) -- (5,-2) node[right] {$X_{2, 1}$};
				\node at (3.4, -2.85) {$\mathcal{E} = \{2\}, \; \mathcal{S}_\mathcal{E} = \mathcal{S}_2$};
				\node at (0.5, -1) {Enrich to $\mathcal{S}_2$};
			\end{tikzpicture}
     \end{minipage}
     \caption{A geometric illustration of stage 1 sample space partition induced by $D_1$. The shaded regions represent the stage 1 sample space where the overall population (blue), subpopulation 1 (orange), subpopulation 2 (purple), and stopping (grey) is selected.}
     \label{fig:decision2}
\end{figure}

The possible selection outcomes of $D_1(\boldsymbol{X}_{\mathcal{K},1})$ are $\{1,2\}$, $\{1\}, \{2\}$, and $\varnothing$, corresponding respectively to continuation in $\mathcal{F}$, enrichment in $\mathcal{S}_1$, enrichment in $\mathcal{S}_2$, and stopping for futility. It is straightforward to verify that $D_1$ yields the following partition of the stage 1 sample space satisfying \cref{class:def} and therefore lies in class $\mathcal{D}$. The right panel of \cref{fig:decision2} visualises these partitions that are projected onto the stage 1 sample mean of the selected subpopulation.
\begin{align*}
	& \{\boldsymbol{X}_{\mathcal{K}, 1}: D_1(\boldsymbol{X}_{\mathcal{K}, 1}) =\{1,2\}\} = \{\boldsymbol{X}_{\mathcal{K},1}: \Delta_* < X_{\mathcal{F}, 1} < +\infty\} \\
	&  \{\boldsymbol{X}_{\mathcal{K}, 1}: D_1(\boldsymbol{X}_{\mathcal{K}, 1}) =\{1\}\} = \{\boldsymbol{X}_{\mathcal{K},1}: \Delta_* < X_{1, 1} \leq (\Delta_* - p_2 X_{2, 1})/p_1\} \\
	&  \{\boldsymbol{X}_{\mathcal{K}, 1}: D_1(\boldsymbol{X}_{\mathcal{K}, 1}) =\{2\}\} = \{\boldsymbol{X}_{\mathcal{K},1}: \Delta_* < X_{2, 1} \leq (\Delta_* - p_1 X_{1, 1})/p_2\}
\end{align*}

\subsection{Subpopulation selection rules for multiple subpopulations case}

\subsubsection{A selection rule under treatment effect monotonicity}
Here, we consider the subpopulation selection rule $D_2$ proposed by \citet{kimani2018point} for adaptive enrichment designs with multiple subpopulations $(\mathcal{K} = \{1,\dots,k\} \; (k\geq 2))$. They assume the true treatment effects in $\mathcal{S}_1$ to $\mathcal{S}_k$ is monotonically decreasing such that $\Delta_{1} \geq \cdots \geq \Delta_{k}$. Let $[m] = \{1,\dots, m\} \; (1 \leq m \leq k)$ denote the set containing the first $m$ positive integers, and ${[m]}^\prime$ denote the complement of $[m]$ such that ${[m]}^\prime = \mathcal{K} \setminus [m]$. Under $D_2$, the largest combined subpopulation $\mathcal{S}_{[m]}$ whose stage 1 sample mean difference is not less than the threshold $\Delta_*$ is selected to continue in stage 2. If $\mathcal{S}_{[m]} < \Delta_*$ for all $m \; (1 \leq m \leq k)$, the trial stops for futility in stage 2.

The selection can be formally characterized by
\begin{equation*}
	\{\boldsymbol{X}_{\mathcal{K}, 1}: D_2(\boldsymbol{X}_{\mathcal{K}, 1}) = [m]\} =\{\boldsymbol{X}_{\mathcal{K}, 1}: X_{[m], 1} \geq \Delta_*, \; X_{[\ell], 1} < \Delta_* \ (\text{for any } m < \ell \leq k) \}.
\end{equation*}
\citet{kimani2018point} demonstrated that the above selection can be re-expressed as
\begin{align*}
	\{\boldsymbol{X}_{\mathcal{K}, 1}: D_2(\boldsymbol{X}_{\mathcal{K}, 1}) = [m]\} = \{\boldsymbol{X}_{\mathcal{K}, 1}: \Delta_* \leq X_{[m], 1} < U_{D_2}([m], \boldsymbol{X}_{{[m]}^\prime, 1}) \},
\end{align*}
where $U_{D_2}([m], \boldsymbol{X}_{{[m]}^\prime, 1}) = \min \{ \frac{p_{[m+1]}\Delta_* - p_{m+1} X_{m+1, 1}}{p_{[m]}}, \frac{p_{[m+2]}\Delta_* - \sum_{i = m+1}^{m+2} p_{i} X_{i, 1}}{p_{[m]}}, \cdots, \frac{p_{[k]}\Delta_* - \sum_{i = m+1}^{k} p_{i} X_{i, 1}}{p_{[m]}} \}$. Accordingly, the stage 1 sample space partition induced by $D_2$ satisfies \cref{class:def} and therefore $D_2$ lies in class $\mathcal{D}$.

\subsubsection{Pick-the-winner selection rule}
Pick-the-winner selection is typically used in adaptive designs with treatment selection and dose-finding \citep{sampson2005drop,bowden2008unbiased,robertson2016accounting}. It may be less appropriate for adaptive enrichment designs, where the broadest possible beneficial subpopulation is often intended to be preserved. Nevertheless, we consider the pick-the-winner selection rule, denoted by $D_3$, as a simple illustrative example. Using $D_3$, the individual subpopulation $\mathcal{S}_w \; (w \in \mathcal{K})$ with the largest stage 1 sample difference is selected to continue in stage 2.

The selection can be formally characterized by
\begin{equation*}
	\{\boldsymbol{X}_{\mathcal{K}, 1}: D_3(\boldsymbol{X}_{\mathcal{K}, 1}) = w \} =\{\boldsymbol{X}_{\mathcal{K}, 1}: X_{w,1} = \max\nolimits_{m \in \mathcal{K}} X_{m,1} \}.
\end{equation*}
Let $L_{D_3}(w,  \boldsymbol{X}_{\{w\}^\prime, 1}) = \max\nolimits_{m \in \{w\}^\prime } X_{m,1}$ and $U_{D_3}(w,  \boldsymbol{X}_{\{w\}^\prime, 1}) = +\infty$, the above selection can be re-expressed as
\begin{equation*}
	\{\boldsymbol{X}_{\mathcal{K}, 1}: D_3(\boldsymbol{X}_{\mathcal{K}, 1}) = w \} =\{\boldsymbol{X}_{\mathcal{K}, 1}: L_{D_3}(w,  \boldsymbol{X}_{\{w\}^\prime, 1}) \leq X_{w,1} < U_{D_3}(w,  \boldsymbol{X}_{\{w\}^\prime, 1}) \},
\end{equation*}
which satisfies \cref{class:def} and thus $D_3$ lies in class $\mathcal{D}$.

\section{Methodology}\label{sec:umvcue}

\subsection{Uniformly minimum variance conditional unbiased estimator}\label{umvcue:derivation}
For a two-stage adaptive enrichment design with patient enrolment follows the scheme described in \cref{sec:notation} and a subpopulation selection rule $D \in \mathcal{D}$ as defined in \cref{sec:class}, we provide a unified expression of the uniformly minimum variance conditional unbiased estimator (UMVCUE) for the treatment effect in the selected subpopulation by \cref{theorem}.

\begin{theorem} \label{theorem}
	For a two-stage adaptive enrichment design with subpopulation selection rule $D \in \mathcal{D}$, the uniformly minimum variance conditional unbiased estimator \pr{UMVCUE} of $\Delta_{\mathcal{E}}$ given $D(\boldsymbol{X}_{\mathcal{K}, 1}) =\mathcal{E}$ is
	\begin{equation}
		X_{\mathcal{E}} +  v_\mathcal{E}  \sqrt{r_{\mathcal{E}}}  \frac{\phi[\frac{\sqrt{r_{\mathcal{E}}}}{v_{\mathcal{E}}} (U_D(\mathcal{E}, \boldsymbol{X}_{\mathcal{E}^\prime, 1}) - X_{\mathcal{E}} )] - \phi[\frac{\sqrt{r_{\mathcal{E}}}}{v_{\mathcal{E}}} (L_D(\mathcal{E}, \boldsymbol{X}_{\mathcal{E}^\prime, 1}) - X_{\mathcal{E}} )]}{\Phi[\frac{\sqrt{r_{\mathcal{E}}}}{v_{\mathcal{E}}} (U_D(\mathcal{E}, \boldsymbol{X}_{\mathcal{E}^\prime, 1}) - X_{\mathcal{E}} )] - \Phi[\frac{\sqrt{r_{\mathcal{E}}}}{v_{\mathcal{E}}}(L_D(\mathcal{E}, \boldsymbol{X}_{\mathcal{E}^\prime, 1}) - X_{\mathcal{E}} )]}.
		\label{UMVCUE:expression}
	\end{equation}
\end{theorem}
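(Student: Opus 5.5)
The plan is to follow the Cohen--Sackrowitz route: Rao--Blackwellise the stage 2 estimator, which is conditionally unbiased, given a sufficient statistic that is complete conditionally on the selection event.

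\textbf{Step 1 (conditional unbiasedness).} Fix $\mathcal{E}\in\text{Im}(D)\setminus\{\varnothing\}$. Stage 2 data are generated only after selection and are independent of stage 1. Hence $X_{\mathcal{E},2}\sim N(\Delta_{\mathcal{E}}, v^2_{\mathcal{E},2})$ conditionally on $D(\boldsymbol{X}_{\mathcal{K},1})=\mathcal{E}$, so $X_{\mathcal{E},2}$ is conditionally unbiased for $\Delta_{\mathcal{E}}$.

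\textbf{Step 2 (conditional sufficiency and completeness).} Reparametrise the stage 1 data as $(X_{\mathcal{E},1},\boldsymbol{X}_{\mathcal{E}',1})$ together with within-$\mathcal{E}$ contrasts. Conditional on $D=\mathcal{E}$, the relevant joint density is the unconditional product density times the indicator in \eqref{class:def}, divided by $P(D=\mathcal{E})$. By \eqref{class:def}, this indicator depends only on $X_{\mathcal{E},1}$ and $\boldsymbol{X}_{\mathcal{E}',1}$.

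Within $\mathcal{E}$, we will use the individual means $(X_{m,1},X_{m,2})_{m\in\mathcal{E}}$. The stage 2 allocation is proportional to $p_m$, so the ratio $r_{\mathcal{E}}$ is common across all $m\in\mathcal{E}$. This makes the natural statistics $Z=(\boldsymbol{X}_{\mathcal{E}',1},(X_m)_{m\in\mathcal{E}})$, where $X_m$ is the pooled two-stage MLE in $\mathcal{S}_m$. The conditional density then forms a full-rank exponential family in $(\bm\Delta_{\mathcal{E}'},\bm\Delta_{\mathcal{E}})$. Because the truncation region has nonempty interior, the natural parameter space contains an open set. Hence $Z$ is sufficient and complete for the conditional family, and by Lehmann--Scheff\'e, $\E[X_{\mathcal{E},2}\mid Z, D=\mathcal{E}]$ is the UMVCUE.

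\textbf{Step 3 (computing the conditional expectation).} Conditioning on $Z$, the only information relevant to $X_{\mathcal{E},2}$ passes through $X_{\mathcal{E}}=\sum_{m\in\mathcal{E}}p_mX_m/p_{\mathcal{E}}$ and $\boldsymbol{X}_{\mathcal{E}',1}$. We have $X_{\mathcal{E}}=(r_{\mathcal{E}}X_{\mathcal{E},1}+X_{\mathcal{E},2})/(1+r_{\mathcal{E}})$, and $v^2_{\mathcal{E},1}=r_{\mathcal{E}}^{-1}v^2_{\mathcal{E},2}$. Unconditionally, given $X_{\mathcal{E}}$, the standard bivariate normal calculation gives
\[
X_{\mathcal{E},1}\mid X_{\mathcal{E}}\sim N\bigl(X_{\mathcal{E}},\,v^2_{\mathcal{E}}/r_{\mathcal{E}}\bigr),
\]
using $v^2_{\mathcal{E}}=v^2_{\mathcal{E},2}/(1+r_{\mathcal{E}})$. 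This law is free of $\Delta$. It is also independent of $\boldsymbol{X}_{\mathcal{E}',1}$, and the within-$\mathcal{E}$ contrasts are independent of $X_{\mathcal{E},1}$ given $X_{\mathcal{E}}$.

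Imposing selection truncates this normal to $(L_D,U_D)$, evaluated at the observed $\boldsymbol{x}_{\mathcal{E}',1}$. The truncated-normal mean formula then gives $\E[X_{\mathcal{E},1}\mid\cdot]=X_{\mathcal{E}}-\frac{v_{\mathcal{E}}}{\sqrt{r_{\mathcal{E}}}}\,\frac{\phi(b)-\phi(a)}{\Phi(b)-\Phi(a)}$, with $a,b$ the standardised limits. Finally, $X_{\mathcal{E},2}=(1+r_{\mathcal{E}})X_{\mathcal{E}}-r_{\mathcal{E}}X_{\mathcal{E},1}$, so taking expectations yields $X_{\mathcal{E}}+v_{\mathcal{E}}\sqrt{r_{\mathcal{E}}}\,\frac{\phi(b)-\phi(a)}{\Phi(b)-\Phi(a)}$, which is \eqref{UMVCUE:expression}.

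\textbf{Main obstacle.} The hardest step is Step 2 together with the conditioning reduction in Step 3. We must justify that conditioning on the full complete statistic reduces to conditioning on $(X_{\mathcal{E}},\boldsymbol{X}_{\mathcal{E}',1})$ only. This requires showing that the within-$\mathcal{E}$ contrasts carry no information about $X_{\mathcal{E},1}$ beyond $X_{\mathcal{E}}$, which is exactly where the common ratio $r_{\mathcal{I}}=r_{\mathcal{E}}$ is used. We must also verify completeness of the truncated family, for which I would invoke completeness of full-rank exponential families restricted to a set of positive measure.
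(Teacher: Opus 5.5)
Your proposal is correct and shares the paper's skeleton: conditional unbiasedness of $X_{\mathcal{E},2}$, then Rao--Blackwell and Lehmann--Scheff\'e in the truncated exponential family, then a truncated-normal mean. It differs in the choice of complete sufficient statistic, and that difference matters.

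The paper works only with the pooled quantities $(X_{\mathcal{E},1}, X_{\mathcal{E},2}, \boldsymbol{X}_{\mathcal{E}^\prime,1})$, treats $(\Delta_{\mathcal{E}}, \boldsymbol{\Delta}_{\mathcal{E}^\prime})$ as the parameter, and asserts directly that $(X_{\mathcal{E}}, \boldsymbol{X}_{\mathcal{E}^\prime,1})$ is complete sufficient. That is true for this reduced experiment. When $|\mathcal{E}|\ge 2$, however, it is not sufficient in the full model indexed by the individual $\Delta_m$: for $m\in\mathcal{E}$, $X_m - X_{\mathcal{E}}$ is independent of the selection event and has mean $\Delta_m-\Delta_{\mathcal{E}}$. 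Read literally, the paper therefore establishes optimality only among functions of the pooled statistics.

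You instead take the full complete sufficient statistic $Z$ and show that the Rao--Blackwellised estimator depends on $Z$ only through $(X_{\mathcal{E}}, \boldsymbol{X}_{\mathcal{E}^\prime,1})$. This gives the stronger conclusion: optimality holds uniformly over heterogeneous $\Delta_m$ within $\mathcal{E}$ and against all conditionally unbiased estimators.

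Your stated main obstacle closes cleanly:
\begin{enumerate}
\item Because two-stage sample sizes within $\mathcal{E}$ are proportional to $p_m$ (this is where the common $r_{\mathcal{E}}$ enters), $X_{\mathcal{E}}=\sum_{m\in\mathcal{E}}(p_m/p_{\mathcal{E}})X_m$.
\item Hence $X_{\mathcal{E},1}-X_{\mathcal{E}}=\sum_{m\in\mathcal{E}}(p_m/p_{\mathcal{E}})(X_{m,1}-X_m)$.
\item Each $X_{m,1}-X_m$ is uncorrelated with the efficient combination $X_m$, so $X_{\mathcal{E},1}-X_{\mathcal{E}}$ is independent of $Z$.
\item The selection event is measurable in $(X_{\mathcal{E},1}, \boldsymbol{X}_{\mathcal{E}^\prime,1})$, so conditioning on $Z$ and selection simply truncates this $N(0, v_{\mathcal{E}}^2/r_{\mathcal{E}})$ variable.
\end{enumerate}

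Your final step is equivalent to the paper's. You take the truncated mean of $X_{\mathcal{E},1}$ and use $X_{\mathcal{E},2}=(1+r_{\mathcal{E}})X_{\mathcal{E}}-r_{\mathcal{E}}X_{\mathcal{E},1}$. The paper instead integrates the conditional density of $X_{\mathcal{E},2}$ over its transformed support.

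One small imprecision: the natural parameter space is all of $\mathbb{R}^k$, because the truncated integrals are dominated by untruncated Gaussian ones. The nonempty interior of the selection region is what guarantees $\Pr(D(\boldsymbol{X}_{\mathcal{K},1})=\mathcal{E})>0$ and full rank. It is not what makes the parameter space contain an open set.
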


\begin{proof}
	The derivation of the UMVCUE is technically involved. We therefore provide a roadmap in the main text, highlighting the key steps, and defer the full derivation to \cref{proof}.
	\begin{enumerate}
		\item We derive the conditional distribution of $(X_{\mathcal{E},1}, X_{\mathcal{E},2}, \boldsymbol{X}_{\mathcal{E}^\prime, 1})$ given $D(\boldsymbol{X}_{\mathcal{K}, 1}) =\mathcal{E}$, which is
		\begin{equation}
			\begin{aligned}
				& f(x_{\mathcal{E}, 1}, x_{\mathcal{E}, 2}, \boldsymbol{x}_{\mathcal{E}^\prime, 1} \mid D(\boldsymbol{X}_{\mathcal{K}, 1}) =\mathcal{E}) \\
				= & \phi([\frac{n_1 p_{\mathcal{E}}/(n_1 p_{\mathcal{E}} + n_2) x_{\mathcal{E}, 1}   + n_2/(n_1 p_{\mathcal{E}} + n_2)x_{\mathcal{E}, 2} ]  - \Delta_{\mathcal{E}}}{v_{\mathcal{E}, 1} v_{\mathcal{E}, 2}/ \sqrt{(v_{\mathcal{E}, 1})^2 + (v_{\mathcal{E}, 2})^2}})\\
				\times & \phi(\frac{x_{\mathcal{E}, 1}  - [n_1 p_{\mathcal{E}}/(n_1 p_{\mathcal{E}} + n_2) x_{\mathcal{E}, 1}   + n_2/(n_1 p_{\mathcal{E}} + n_2)x_{\mathcal{E}, 2}  ]}{(v_{\mathcal{E}, 1})^2 / \sqrt{(v_{\mathcal{E}, 1})^2 + (v_{\mathcal{E}, 2})^2}}) \\
				\times &   \frac{1}{v_{\mathcal{E}, 1}}  \frac{1}{v_{\mathcal{E}, 2}} \prod_{i \in \mathcal{E}^\prime} \bigl[ \frac{1}{v_{i,1}} \phi(\frac{x_{i, 1} - \Delta_i}{v_{i,1}}) \bigr]  \frac{\mathds{1}(L_D(\mathcal{E}, \boldsymbol{X}_{\mathcal{E}^\prime, 1})<x_{\mathcal{E}, 1}<U_D(\mathcal{E}, \boldsymbol{X}_{\mathcal{E}^\prime, 1}))}{\Pr(L_D(\mathcal{E}, \boldsymbol{X}_{\mathcal{E}^\prime, 1})<X_{\mathcal{E}, 1}<U_D(\mathcal{E}, \boldsymbol{X}_{\mathcal{E}^\prime, 1}))}
			\end{aligned}
			\label{eq:cpdf}
		\end{equation}
		\item We show that \cref{eq:cpdf} belongs to the exponential family and is full rank. Hence, $(X_\mathcal{E}, \boldsymbol{X}_{\mathcal{E}^\prime, 1})$ is complete sufficient statistic for $(\Delta_{\mathcal{E}}, \boldsymbol{\Delta}_{\mathcal{E}^\prime})$, where $\boldsymbol{\Delta}_{\mathcal{E}^\prime} = (\Delta_m)_{m \in \mathcal{E}^\prime}$.
		\item By Rao-Blackwell Theorem \citep{rao1945information, blackwell1947conditional} and Lehmann–Scheffé Theorem \citep{lehmann2011completeness}, $\E[X_{\mathcal{E},2} \mid X_\mathcal{E}, \boldsymbol{X}_{\mathcal{E}^\prime, 1}, D(\boldsymbol{X}_{\mathcal{K}, 1}) =\mathcal{E}]$ is the unique UMVCUE of $\Delta_{\mathcal{E}}$. We derive this expression, which is given by
		\begin{align*}
			& \E[X_{\mathcal{E},2} \mid X_\mathcal{E}, \boldsymbol{X}_{\mathcal{E}^\prime, 1}, D(\boldsymbol{X}_{\mathcal{K}, 1}) =\mathcal{E}] \\ = & X_{\mathcal{E}} +  v_\mathcal{E}  \sqrt{r_{\mathcal{E}}}  \frac{\phi[\frac{\sqrt{r_{\mathcal{E}}}}{v_{\mathcal{E}}} (U_D(\mathcal{E}, \boldsymbol{X}_{\mathcal{E}^\prime, 1}) - X_{\mathcal{E}} )] - \phi[\frac{\sqrt{r_{\mathcal{E}}}}{v_{\mathcal{E}}} (L_D(\mathcal{E}, \boldsymbol{X}_{\mathcal{E}^\prime, 1}) - X_{\mathcal{E}} )]}{\Phi[\frac{\sqrt{r_{\mathcal{E}}}}{v_{\mathcal{E}}} (U_D(\mathcal{E}, \boldsymbol{X}_{\mathcal{E}^\prime, 1}) - X_{\mathcal{E}} )] - \Phi[\frac{\sqrt{r_{\mathcal{E}}}}{v_{\mathcal{E}}}(L_D(\mathcal{E}, \boldsymbol{X}_{\mathcal{E}^\prime, 1}) - X_{\mathcal{E}} )]}
		\end{align*}
	\end{enumerate}
\end{proof}

For two-stage adaptive enrichment designs whose subpopulation selection rules do not strictly conform to class $\mathcal{D}$, the sample space partition perspective proposed in \cref{class:def} may remain applicable. We formalize the applicability condition and provide a general expression of the UMVCUE for the treatment effect in an arbitrary subpopulation within the selected population by \cref{corollary}.

\begin{theorem}[Generalized version of \cref{theorem}] \label{corollary}
	For a two-stage adaptive enrichment design with an arbitrary subpopulation selection rule $D^*$. Given the subpopulation selection event $D^*(\boldsymbol{X}_{\mathcal{K}, 1}) = \mathcal{E}$, suppose that for a subpopulation $\mathcal{S}_\mathcal{I} \; (\mathcal{I} \subseteq \mathcal{E})$, the selection event $D^*(\boldsymbol{X}_{\mathcal{K}, 1}) = \mathcal{E}$ with fixed observed value $\boldsymbol{X}_{\mathcal{I}^\prime, 1} = \boldsymbol{x}_{\mathcal{I}^\prime, 1}$ implies its stage 1 sample mean difference is bounded such that
	\begin{gather*}
		\{L_{D^*}(\mathcal{E}, \boldsymbol{x}_{\mathcal{I}^\prime, 1}) < X_{\mathcal{I}, 1} < U_{D^*}(\mathcal{E}, \boldsymbol{x}_{\mathcal{I}^\prime, 1})\},
	\end{gather*}
	Then, the uniformly minimum variance conditional unbiased estimator \pr{UMVCUE} of $\Delta_{\mathcal{I}}$ given the selection $D^*(\boldsymbol{X}_{\mathcal{K}, 1}) =\mathcal{E}$ is
	\begin{equation}
		X_{\mathcal{I}} +  v_\mathcal{I}  \sqrt{r_{\mathcal{I}}}  \frac{\phi[\frac{\sqrt{r_{\mathcal{I}}}}{v_{\mathcal{I}}} (U_D(\mathcal{E}, \boldsymbol{X}_{\mathcal{I}^\prime, 1}) - X_{\mathcal{I}} )] - \phi[\frac{\sqrt{r_{\mathcal{I}}}}{v_{\mathcal{I}}} (L_D(\mathcal{E}, \boldsymbol{X}_{\mathcal{I}^\prime, 1}) - X_{\mathcal{I}} )]}{\Phi[\frac{\sqrt{r_{\mathcal{I}}}}{v_{\mathcal{I}}} (U_D(\mathcal{E}, \boldsymbol{X}_{\mathcal{I}^\prime, 1}) - X_{\mathcal{I}} )] - \Phi[\frac{\sqrt{r_{\mathcal{I}}}}{v_{\mathcal{I}}} (L_D(\mathcal{E}, \boldsymbol{X}_{\mathcal{I}^\prime, 1}) - X_{\mathcal{I}} )]}.
		\label{UMVCUE:expression2}
	\end{equation}
\end{theorem}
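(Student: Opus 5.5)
We prove the generalized statement by running the three-step roadmap of \cref{theorem} with $\mathcal{I}$ playing the role of $\mathcal{E}$. The sub-selected set $\mathcal{I}$ now enters through $(X_{\mathcal{I},1}, X_{\mathcal{I},2})$, and the conditioning vector is $\boldsymbol{X}_{\mathcal{I}^\prime,1}$ rather than $\boldsymbol{X}_{\mathcal{E}^\prime,1}$.

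\textbf{Step 1: the conditional density.} Because $\mathcal{I} \subseteq \mathcal{E}$, patients from every $\mathcal{S}_m$ with $m \in \mathcal{I}$ are enrolled in stage 2. In stage 2, $\mathcal{S}_\mathcal{I}$ has sample size $n_2 p_\mathcal{I}/p_\mathcal{E}$, so $X_{\mathcal{I},2} \sim N(\Delta_\mathcal{I}, v^2_{\mathcal{I},2})$ with $v^2_{\mathcal{I},1}/v^2_{\mathcal{I},2} = 1/r_\mathcal{I}$. Given the stage 1 data, the stage 2 statistics are independent of the stage 1 statistics. We parametrize the stage 1 data by $X_{\mathcal{I},1}$ together with $\boldsymbol{X}_{\mathcal{I}^\prime,1}$ and the within-$\mathcal{I}$ contrasts of $\boldsymbol{X}_{\mathcal{I},1}$; the latter are ancillary for $\Delta_\mathcal{I}$.

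Here is the subtlety. The hypothesis only says that the event $D^*=\mathcal{E}$ implies $X_{\mathcal{I},1}\in(L,U)$ given $\boldsymbol{x}_{\mathcal{I}^\prime,1}$. For the UMVCUE formula to hold we need equality of events, exactly as in \cref{class:def} but with $\mathcal{I}^\prime$ in place of $\mathcal{E}^\prime$. So we read the hypothesis as a characterization: given $\boldsymbol{X}_{\mathcal{I}^\prime,1}$, the selection event is the interval for $X_{\mathcal{I},1}$. In addition, the remaining stage 1 components of $\mathcal{I}$ enter only through $X_{\mathcal{I},1}$, or we condition on them too.

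Under this reading, we write the joint density of $(X_{\mathcal{I},1}, X_{\mathcal{I},2}, \boldsymbol{X}_{\mathcal{I}^\prime,1})$ restricted to the selection event. The factorization mirrors \cref{eq:cpdf}: the product of the two normal densities for $\mathcal{I}$, times the densities of the $X_{i,1}$ with $i\in\mathcal{I}^\prime$, times the indicator $\mathds{1}(L<x_{\mathcal{I},1}<U)$ over the selection probability. Completing the square rewrites the $\mathcal{I}$ part as a normal density in the pooled MLE $X_\mathcal{I} = (r_\mathcal{I} X_{\mathcal{I},1}+X_{\mathcal{I},2})/(1+r_\mathcal{I})$ times a density of $X_{\mathcal{I},1}-X_\mathcal{I}$ that does not depend on $\Delta_\mathcal{I}$.

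\textbf{Step 2: completeness.} The exponent is linear in $\Delta_\mathcal{I}$ through $X_\mathcal{I}$ and linear in $\Delta_i$ ($i\in\mathcal{I}^\prime$) through $X_{i,1}$. The parameter space contains an open set, so this is a full-rank exponential family. The indicator depends only on the data and the normalizing constant only on the parameters, so neither breaks the structure. Hence $(X_\mathcal{I}, \boldsymbol{X}_{\mathcal{I}^\prime,1})$ is complete sufficient for $(\Delta_\mathcal{I},\boldsymbol{\Delta}_{\mathcal{I}^\prime})$.

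\textbf{Step 3: Rao--Blackwell and Lehmann--Scheffé.} $X_{\mathcal{I},2}$ is conditionally unbiased for $\Delta_\mathcal{I}$, because stage 2 data are independent of the selection event. By Rao--Blackwell and Lehmann--Scheffé, $\E[X_{\mathcal{I},2}\mid X_\mathcal{I},\boldsymbol{X}_{\mathcal{I}^\prime,1}, D^*=\mathcal{E}]$ is the UMVCUE.

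To evaluate it, write $X_{\mathcal{I},2} = (1+r_\mathcal{I})X_\mathcal{I} - r_\mathcal{I} X_{\mathcal{I},1}$, which reduces the problem to computing $\E[X_{\mathcal{I},1}\mid X_\mathcal{I},\boldsymbol{X}_{\mathcal{I}^\prime,1}]$. Conditionally on $X_\mathcal{I}$ and before truncation, $X_{\mathcal{I},1}$ is normal with mean $X_\mathcal{I}$ and variance $v^2_{\mathcal{I},1}v^2_{\mathcal{I},2}/(v^2_{\mathcal{I},1}+v^2_{\mathcal{I},2})\cdot (v_{\mathcal{I},1}^2/v^2_{\mathcal{I},2})$. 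With $v_\mathcal{I}^2 = v_{\mathcal{I},1}^2 v_{\mathcal{I},2}^2/(v^2_{\mathcal{I},1}+v^2_{\mathcal{I},2})$ and $v^2_{\mathcal{I},2} = r_\mathcal{I} v^2_{\mathcal{I},1}$, this variance simplifies to $v_\mathcal{I}^2/r_\mathcal{I}$. After conditioning on selection, it is truncated to $(L,U)$. The truncated normal mean formula then gives
\[
\E[X_{\mathcal{I},1}\mid \cdot] = X_\mathcal{I} - \frac{v_\mathcal{I}}{\sqrt{r_\mathcal{I}}}\,\frac{\phi(b)-\phi(a)}{\Phi(b)-\Phi(a)},
\]
with $a,b = \sqrt{r_\mathcal{I}}(L-X_\mathcal{I})/v_\mathcal{I}$ and $\sqrt{r_\mathcal{I}}(U-X_\mathcal{I})/v_\mathcal{I}$. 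Substituting into $(1+r_\mathcal{I})X_\mathcal{I}-r_\mathcal{I}\E[X_{\mathcal{I},1}\mid\cdot]$ yields \cref{UMVCUE:expression2}.

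\textbf{Main obstacle.} The hard part is Step 1. We must justify that conditioning on $D^*=\mathcal{E}$ reduces, given $\boldsymbol{X}_{\mathcal{I}^\prime,1}$, to truncation of $X_{\mathcal{I},1}$ alone, with no dependence on the within-$\mathcal{I}$ contrasts. If the contrasts enter $L$ and $U$, we would first try to include them in the conditioning set, noting that they are independent of $X_{\mathcal{I},1}$ and free of $\Delta_\mathcal{I}$. Doing so preserves completeness for $\Delta_\mathcal{I}$ and does not change the final formula.
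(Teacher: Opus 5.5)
Your Steps 1--3 are the paper's proof written out in full; the paper only declares the argument analogous to \cref{theorem}, with $(X_{\mathcal I},\boldsymbol X_{\mathcal I^\prime,1})$ complete sufficient and $X_{\mathcal I,2}$ conditionally unbiased. Your truncated-normal calculation is correct. Reading ``implies'' as an equality of events is exactly what is needed, and it is what the paper uses tacitly.

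However, Steps 1--2 contain a gap whenever $\mathcal I\subsetneq\mathcal E$, and the paper's proof has the same gap. In that case the observed data also include $X_{m,2}$ for $m\in\mathcal E\setminus\mathcal I$, and you discard them by working only with the law of $(X_{\mathcal I,1},X_{\mathcal I,2},\boldsymbol X_{\mathcal I^\prime,1})$. Given $D^*(\boldsymbol X_{\mathcal K,1})=\mathcal E$, the full data form a full-rank exponential family with complete sufficient statistic $T=\bigl((X_m)_{m\in\mathcal E},\boldsymbol X_{\mathcal E^\prime,1}\bigr)$. For $m\in\mathcal E\setminus\mathcal I$ the natural statistic is the pooled $X_m$, not $X_{m,1}$. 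Hence $(X_{\mathcal I},\boldsymbol X_{\mathcal I^\prime,1})$ is not sufficient. If $L$ or $U$ depends on some $x_{m,1}$ with $m\in\mathcal E\setminus\mathcal I$, your estimator is not a function of $T$. Rao--Blackwellizing it on $T$ then strictly lowers its variance, so Lehmann--Scheff\'e cannot give the UMVCUE property.

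The paper's own worked example is such a case. There $\mathcal E=\{1,2\}$, $\mathcal I=\{1\}$ and $L=(\Delta_*-p_2x_{2,1})/p_1$. The estimator is strictly increasing in $X_{2,1}$, which is not determined by $(X_1,X_2)$. Write $Z_m=X_{m,1}-X_m$, which is independent of $(X_1,X_2)$ before selection. The actual UMVCUE is $X_1-r_{\mathcal E}\,\E[Z_1\mid p_1Z_1+p_2Z_2>\Delta_*-X_{\mathcal F}]$, with $X_{\mathcal F}$ held at its observed value. This is a function of $(X_1,X_{\mathcal F})$ only.

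What your argument (and the paper's) actually proves is that the displayed estimator is conditionally unbiased. It has minimum variance only among conditionally unbiased estimators that are functions of $(X_{\mathcal I,1},X_{\mathcal I,2},\boldsymbol X_{\mathcal I^\prime,1})$, i.e.\ in the submodel that ignores $X_{m,2}$ for $m\in\mathcal E\setminus\mathcal I$. The unrestricted UMVCUE claim does follow whenever the estimator is a function of $T$. That holds for $\mathcal I=\mathcal E$ (\cref{theorem}), and more generally when $L$ and $U$ depend on $\boldsymbol x_{\mathcal I^\prime,1}$ only through $\boldsymbol x_{\mathcal E^\prime,1}$. Otherwise you must condition on $T$. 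The selection event given $T$ is then a region in the joint stage 1 deviations $(X_{m,1}-X_m)_{m\in\mathcal E}$, not an interval for $X_{\mathcal I,1}$ alone, so \cref{UMVCUE:expression2} does not follow.

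Your fallback for within-$\mathcal I$ contrasts, conditioning on the stage 1 contrasts, has the same problem, since those are not sufficient for the full data either. Under your equality reading, though, $L$ and $U$ do not involve them, so that case never arises.
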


\begin{proof}
	The proof is completely analogous with that of \cref{theorem} in \cref{proof}, where $(X_{\mathcal{I}, 1}, \boldsymbol{X}_{\mathcal{I}^\prime, 1})$ is the complete sufficient statistic for $(\Delta_{\mathcal{I}}, \boldsymbol{\Delta}_{\mathcal{I}^\prime})$ where $\boldsymbol{\Delta}_{\mathcal{I}^\prime} = (\Delta_m)_{m \in \mathcal{I}^\prime}$, and $X_{\mathcal{I}, 2}$ is an unbiased estimator of $\Delta_{\mathcal{I}}$, given that $D^*(\boldsymbol{X}_{\mathcal{K}, 1}) = \mathcal{E}$.
\end{proof}

When the target subpopulation $\mathcal{S}_{\mathcal I}$ in \cref{corollary} is an individual subpopulation, so that $|\mathcal I| = 1$, the UMVCUE of its treatment effect is given in the following corollary.

\begin{corollary}[Corollary to \cref{corollary}] \label{corollary2}
	For a two-stage adaptive enrichment design with an arbitrary subpopulation selection rule $D^*$. Given the subpopulation selection event $D^*(\boldsymbol{X}_{\mathcal{K}, 1}) = \mathcal{E}$, suppose that for an individual subpopulation $\mathcal{S}_m \; (m \in \mathcal{E})$, the selection event $D^*(\boldsymbol{X}_{\mathcal{K}, 1}) = \mathcal{E}$ with fixed observed value $\boldsymbol{X}_{\{m\}^\prime, 1} = \boldsymbol{x}_{\{m\}^\prime, 1}$ implies its stage 1 sample mean difference is bounded such that
	\begin{gather*}
		\{L_{D^*}(\mathcal{E}, \boldsymbol{x}_{\{m\}^\prime, 1}) < X_{m, 1} < U_{D^*}(\mathcal{E}, \boldsymbol{x}_{\{m\}^\prime, 1})\},
	\end{gather*}
	Then, the uniformly minimum variance conditional unbiased estimator \pr{UMVCUE} of $\Delta_{m}$ given the selection $D^*(\boldsymbol{X}_{\mathcal{K}, 1}) =\mathcal{E}$ is
	\begin{equation}
		X_{m} +  v_m  \sqrt{r_{m}}  \frac{\phi[\frac{\sqrt{r_{m}}}{v_{m}} (U_D(\mathcal{E}, \boldsymbol{X}_{\{m\}^\prime, 1}) - X_{m} )] - \phi[\frac{\sqrt{r_{m}}}{v_{m}} (L_D(\mathcal{E}, \boldsymbol{X}_{\{m\}^\prime, 1}) - X_{m} )]}{\Phi[\frac{\sqrt{r_{m}}}{v_{m}} (U_D(\mathcal{E}, \boldsymbol{X}_{\{m\}^\prime, 1}) - X_{m} )] - \Phi[\frac{\sqrt{r_{m}}}{v_{m}} (L_D(\mathcal{E}, \boldsymbol{X}_{\{m\}^\prime, 1}) - X_{m} )]}.
		\label{UMVCUE:expression3}
	\end{equation}
\end{corollary}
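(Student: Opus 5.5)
The plan is to obtain \cref{corollary2} as the special case $\mathcal{I}=\{m\}$ of \cref{corollary}, and then check that every quantity in \cref{UMVCUE:expression2} reduces to the corresponding quantity in \cref{UMVCUE:expression3}.

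First, set $\mathcal{I}=\{m\}$ with $m\in\mathcal{E}$. Then $\mathcal{I}\subseteq\mathcal{E}$ as \cref{corollary} requires, and $\mathcal{I}^\prime=\{m\}^\prime$. The hypothesis of \cref{corollary2} says the event $D^*(\boldsymbol{X}_{\mathcal{K},1})=\mathcal{E}$, with $\boldsymbol{X}_{\{m\}^\prime,1}$ held fixed, confines $X_{m,1}=X_{\{m\},1}$ to $(L_{D^*},U_{D^*})$. This is exactly the bounding condition of \cref{corollary} for $\mathcal{I}=\{m\}$. The enrolment scheme guarantees that $\mathcal{S}_m$ is sampled in both stages because $m\in\mathcal{E}$. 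Hence $X_m$, $v_m$ and $X_{m,2}$ are well defined. $X_{m,2}\sim N(\Delta_m,v_{m,2}^2)$ is independent of stage 1 data, so it remains unbiased for $\Delta_m$ conditional on the selection event.

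Second, substitute into \cref{UMVCUE:expression2}. By the shorthand of \cref{sec:statistics}, $X_{\{m\}}=X_m$ and $v_{\{m\}}=v_m$. The sample-size ratio is $r_{\{m\}}=n_1p_m/(n_2p_m/p_{\mathcal{E}})=r_{\mathcal{E}}$, as noted in \cref{sec:notation}, so $r_m$ is well defined and equals $r_{\mathcal{E}}$. The thresholds become $L_{D^*}(\mathcal{E},\boldsymbol{X}_{\{m\}^\prime,1})$ and $U_{D^*}(\mathcal{E},\boldsymbol{X}_{\{m\}^\prime,1})$. Making these replacements turns \cref{UMVCUE:expression2} into \cref{UMVCUE:expression3} term by term.

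The only step needing real attention is confirming that the argument behind \cref{corollary} works when $|\mathcal{I}|=1$. That argument relies on the completeness of $(X_{m},\boldsymbol{X}_{\{m\}^\prime,1})$ for $(\Delta_m,\boldsymbol{\Delta}_{\{m\}^\prime})$, which I would check directly. The joint conditional density of $(X_{m,1},X_{m,2},\boldsymbol{X}_{\{m\}^\prime,1})$ is a product of normal densities times the indicator of the selection region. This indicator depends only on $X_{m,1}$ and $\boldsymbol{X}_{\{m\}^\prime,1}$, not on the parameters, and is divided by a normalising constant. The parameters enter only through the sufficient statistic $(X_m,\boldsymbol{X}_{\{m\}^\prime,1})$, whose natural parameter space contains an open set in $\mathbb{R}^k$. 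The family is therefore a full-rank exponential family, and completeness follows.

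Rao--Blackwell and Lehmann--Scheffé then give the UMVCUE as $\E[X_{m,2}\mid X_m,\boldsymbol{X}_{\{m\}^\prime,1},\text{selection}]$. This is computed exactly as in the proof of \cref{theorem}. Given $X_m$, the statistic $X_{m,1}$ is normal with mean $X_m$ and standard deviation $v_m/\sqrt{r_m}$, truncated to $(L,U)$. Using $X_{m,2}=X_m+r_m(X_m-X_{m,1})$ and the mean of a truncated normal then yields the stated $\phi/\Phi$ ratio.
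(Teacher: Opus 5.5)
You follow the paper's own route: \cref{corollary} with $\mathcal{I}=\{m\}$, proved as in \cref{proof}. The algebra is right. $X_{\{m\}}=X_m$, $r_{\{m\}}=r_{\mathcal{E}}$, the truncated law of $X_{m,1}$ given $X_m$, and $X_{m,2}=X_m+r_m(X_m-X_{m,1})$ together give \cref{UMVCUE:expression3}. This needs the hypothesis read as saying that the section of $\{D^*(\boldsymbol{X}_{\mathcal{K},1})=\mathcal{E}\}$ at fixed $\boldsymbol{x}_{\{m\}',1}$ \emph{equals} $(L,U)$; a one-way implication would not pin down the truncation set.

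The genuine gap is the sufficiency--completeness step, and the paper's one-sentence proof of \cref{corollary} has the same gap. You write down the conditional law of $(X_{m,1},X_{m,2},\boldsymbol{X}_{\{m\}',1})$ only. When $|\mathcal{E}|\ge 2$, the trial also observes $X_{i,2}$ for each $i\in\mathcal{E}\setminus\{m\}$. Given the selection and $(X_m,\boldsymbol{X}_{\{m\}',1})$, each such $X_{i,2}$ is still $N(\Delta_i,v_{i,2}^2)$, so that statistic is not sufficient for the actual data. The full conditional family is full-rank exponential with complete sufficient statistic $T=\bigl(X_m,(X_i)_{i\in\mathcal{E}\setminus\{m\}},\boldsymbol{X}_{\mathcal{E}',1}\bigr)$, in which the pooled $X_i$ replaces $X_{i,1}$. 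Whenever $L$ or $U$ depends on some $X_{i,1}$ with $i\in\mathcal{E}\setminus\{m\}$, your estimator is not a function of $T$. By the uniqueness part of Lehmann--Scheff\'e it is then not the UMVCUE: conditioning it on $T$ keeps it conditionally unbiased and strictly lowers its variance. What your argument does establish is conditional unbiasedness, plus minimum variance among functions of $(X_{m,1},X_{m,2},\boldsymbol{X}_{\{m\}',1})$. That coincides with the UMVCUE in cases such as $\mathcal{E}=\{m\}$, but not in general.

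The paper's worked example is exactly such a case. Under $D_1$ with $\mathcal{E}=\mathcal{K}$, $L=(\Delta_*-p_2X_{2,1})/p_1$, so the formula is strictly monotone in $X_{2,1}$, which $T=(X_1,X_2)$ does not determine. Here the conditioning on $T$ can be done in closed form. Given $T$, the $X_{i,1}$ are independent $N(X_i,v_i^2/r_{\mathcal{E}})$ truncated to $p_1X_{1,1}+p_2X_{2,1}>\Delta_*$. Regressing $X_{1,1}$ on this linear combination gives $\E[X_{1,2}\mid X_1,X_2,D_1(\boldsymbol{X}_{\mathcal{K},1})=\mathcal{K}]=X_1-(X_{\mathcal{F}}-\hat\Delta_{\mathcal{F}})$, where $\hat\Delta_{\mathcal{F}}$ is the \cref{theorem} estimator. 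Numerically this is about $0.112$, not $0.124$.

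The problem also shows up in \cref{tab:example}. A $p$-weighted average of UMVCUEs computed from the same data is a conditionally unbiased function of the complete sufficient statistic, so it must equal the UMVCUE of $\Delta_{\mathcal{F}}$. Yet $0.5(0.124)+0.5(-0.031)=0.0465\neq0.042$.

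To get a correct result you must either restrict the optimality claim to the reduced data, or Rao--Blackwellize on the full $T$. The latter in general needs a multivariate truncated-normal calculation rather than the univariate $\phi/\Phi$ ratio.
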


\subsection{Conditional estimator when the outcome variance $\sigma^2$ is unknown}

So far in the paper, we assume the outcome variance $\sigma^2$ is known. Note that, the expressions of UMVCUEs in \cref{UMVCUE:expression} and \cref{UMVCUE:expression2} involves $\sigma$ implicitly by $v_\mathcal{E}$ and $v_\mathcal{I}$, respectively. For the unknown variance case, \citet{robertson2019conditionally} derived the UMVCUE for adaptive designs with treatment selection, where the best performing treatment is selected to continue in stage 2. \citet{robertson2019conditionally} compared the performance, including bias, mean squared error, and distribution, of the UMVCUE derived with unknown variance and the UMVCUE derived with known variance and plug-in estimated variance. The simulation showed that the performance of these two forms of UMVCUEs are comparable in the vast majority of cases. Hence, for the sake of tractability and practical utility, we provide a plug-in conditional estimator for adaptive enrichment designs when the variance is unknown by \cref{PiCE:def}.

\begin{definition}[Plug-in Conditional Estimator] \label{PiCE:def}
	When the variance $\sigma^2$ is unknown, the Plug-in Conditional Estimator (PiCE) is defined by substituting $\sigma^2$ with its estimator $\hat{\sigma}^2$ in $v_\mathcal{E}$ and $v_\mathcal{I}$ within the expression of the UMVCUE in \cref{UMVCUE:expression} and \cref{UMVCUE:expression2}, respectively. $\hat{\sigma}^2$ is given by
	\begin{gather}
		\hat{\sigma}^2 = \frac{\sum_{m \in \mathcal{E}, a \in \{0,1\}} SSW_{ma} }{n_1+n_2 - 2k}, \label{PiCE} \\
		\text{where } \; SSW_{ma} = \sum\nolimits_{\{i:M_i = m , A_i = a\}} (Y_i - \bar{Y}_{ma} )^2, \; \; \bar{Y}_{ma} = \frac{\sum_{\{i:M_i = m , A_i = a\}} Y_i }{|\{i:M_i = m , A_i = a \}|} \notag 
	\end{gather}
\end{definition}

\section{Worked Example}\label{sec:example}

To illustrate the practical use of the proposed UMVCUE framework, we revisit the simplified adaptive enrichment trial considered by \citet{li2026confidence}, which was motivated by the ENRICH trial. The Early Minimally Invasive Removal of Intracerebral Hemorrhage (ENRICH) trial \citep{pradilla2024trial} adopted a multi-stage adaptive enrichment design to account for potential heterogeneity in treatment effects across patient subpopulations. Patients were classified by haemorrhage location into those with lobar haemorrhage ($\mathcal{S}_1$) and those with anterior basal ganglia haemorrhage ($\mathcal{S}_2$). The planned sample size ranged from 150 to 300 patients, with interim analyses scheduled after the enrolment of 150, 175, 200, 225, 250, and 275 patients. Ultimately, 300 patients were recruited. After the interim analysis at 175 patients, enrolment was restricted to the lobar haemorrhage subgroup ($\mathcal{S}_1$). In the confirmatory analysis, treatment effects in $\mathcal{S}_1$, $\mathcal{S}_2$, and $\mathcal{F}$ were estimated. The use of subgroup analysis, multi-stage interim decision-making, and adaptive enrolment makes the trial closely aligned with the adaptive enrichment designs considered in this work.

In the simplified design, we assume a maximum sample size of 300 patients, with a single interim analysis conducted after 200 patients using the subpopulation selection rule $D_1$ defined in \cref{sec:design1}. Consistent with the ENRICH trial, the primary endpoint is the utility-weighted modified Rankin scale at 180 days. Following published studies \citep{mendelow2013early,hanley2016safety,pradilla2024trial}, we assume that the endpoint is normally distributed with a common variance across treatment arms and subpopulations, taking $\sigma^2 = 0.36^2$. For analytical simplicity, the subgroup prevalences are set to $p_1 = p_2 = 0.5$. This approximation is in line with the early accrual observed in the ENRICH trial, where, among the first 175 enrolled patients, 83 (47.4\%) had lobar haemorrhage and 92 (52.6\%) had anterior basal ganglia haemorrhage. The interim decision threshold is set to $\Delta^* = 0.025$, chosen so that, when the true treatment effect in the overall population equals the minimum clinically meaningful effect reported in the ENRICH trial, $\Delta_{\min} = 0.075$, the stage 1 estimator $X_{\mathcal F,1}$ exceeds $\Delta^*$ with probability greater than $80\%$. This simplified design falls within the class of designs to which the proposed estimation framework applies.

Stage-wise summaries from the ENRICH trial are not reported. Therefore, for illustration purposes, \citet{li2026confidence} simulated stage-wise sample means that are numerically consistent with the reported overall sample means in $\mathcal{S}_1$ and $\mathcal{S}_2$. In the present illustration, we use the same values. The stage 1 sample means in $\mathcal{S}_1$, $\mathcal{S}_2$, and the overall population $\mathcal{F}$ are $x_{1, 1} = 0.113$, $x_{2,1} = 0.013$, and $x_{\mathcal{F},1} = 0.063$, respectively. According to the decision rule, the trial proceeds with the overall population, i.e., $D_1(\boldsymbol{X}_{\mathcal{K},1}) = \mathcal{K} = \{1,2\}$. At stage 2, the sample means are $x_{1, 2}  = 0.155$, $x_{2, 2}  = -0.064$, and $x_{\mathcal{F},2} = 0.045$. The corresponding overall naive estimates are $x_{1}= 0.127$, $x_{2} = -0.013$, and $x_{\mathcal{F}} = 0.057$. 

\begin{table}[!htbp]
	\centering
	\begin{tabular}{c c c c}
		\hline
		Method & estimates for $\Delta_{\mathcal{F}}$ & estimates for $\Delta_{1}$ & estimates for $\Delta_{2}$ \\ \hline
		MLE & 0.057& 0.127 & -0.013 \\
		UMVCUE & 0.042 & 0.124 & -0.031 \\ \hline	
	\end{tabular}
	\caption{MLEs and UMVCUEs for treatment effects in different patient populations.}
	\label{tab:example}
\end{table}

Using these quantities, we calculate the UMVCUEs for the treatment effects in $\mathcal{F}$, $\mathcal{S}_1$, and $\mathcal{S}_2$ as follows. Since the trial continues with the overall population, from \cref{sec:design1}, we have showed the induced stage 1 sample space partition satisfying $\{\boldsymbol{X}_{\mathcal{K},1}: \Delta_* < X_{\mathcal{F}, 1} < +\infty\}$. Then, we can apply  \cref{theorem} to calculate the UMVCUE of $\Delta_\mathcal{F}$ by setting $L_{D_1}(\mathcal{K}, \boldsymbol{X}_{\mathcal{K}^\prime, 1}) = \Delta_*$, $U_{D_1}(\mathcal{K}, \boldsymbol{X}_{\mathcal{K}^\prime, 1}) = +\infty$, $x_{\mathcal{F}} = 0.057$, and other fixed quantities in \cref{UMVCUE:expression}. To obtain estimates of $\Delta_1$ and $\Delta_2$, note that, given $D_1(\boldsymbol{X}_{\mathcal{K},1}) = \mathcal{K}$, the stage 1 sample space partition can be re-expressed as $\{\boldsymbol{X}_{\mathcal{K},1}: (\Delta_* - p_2 X_{2, 1})/p_1 < X_{1, 1} < +\infty\}$. In this case, \cref{corollary2} is applicable to obtain the UMVCUE for $\Delta_1$, where $X_{1, 1}$ served as a complete and sufficient statistic. Then, by setting $L_{D_1}(\mathcal{K}, \boldsymbol{X}_{{\{1\}}^\prime, 1}) =(\Delta_* - p_2 x_{2, 1})/p_1 $, $U_{D_1}(\mathcal{K}, \boldsymbol{X}_{{\{1\}}^\prime, 1}) = +\infty$, $x_{1} = 0.127$, and other fixed quantities into \cref{UMVCUE:expression2}, we can calculate the UMVCUE of $\Delta_{1}$. A completely analogous procedure can be applied to calculate the UMVCUE for $\Delta_2$. The values of MLEs and UMVCUEs are compared in \cref{tab:example}.

Theoretically, the conventional MLE ignores the data-dependent subpopulation selection and treats the data as arising from a fixed design, which is often biased. In contrast, the derivation of UMVCUEs is conditional on the adaptive selection and therefore adjusts for the adaptive nature of the design, resulting in conditional unbiased estimation. In this example design, the trial continues with the overall population in stage 2 when the observed stage 1 data are promising, which is also indicated by the induced sample space partition, such that there are lower thresholds for $X_{\mathcal{F}, 1}$, $X_{1, 1}$, and $X_{2, 1}$. Hence, in this case, MLEs are more likely to overestimate the treatment effects; while, the value of UMVCUEs are consistently less than those of MLEs across the overall population and both subpopulations, which suggests UMVCUEs are able to correct such overestimation.

\section{Simulation Study} \label{sec:sim}

We consider the case where the overall patient population $\mathcal{F}$ can be partitioned into two subpopulations ($\mathcal{S}_1$ and $\mathcal{S}_2$), and the adaptive enrichment design with subpopulation selection rule $D_1$ described in \cref{sec:design1} is used. Suppose patient enrolment follows the scheme described in \cref{sec:notation} and the proportion of each subpopulation is $p_1 = p_2 = 0.5$. Five hypothetical scenarios that represent different degrees of treatment effect and heterogeneity are considered. In scenario 1, we assume the experimental treatment benefits the overall population equally such that $\Delta_1 = \Delta_2 = 0.5$. In scenario 2, suppose $\mathcal{S}_1$ benefits more from the treatment than $\mathcal{S}_2$ such that $\Delta_1 = 0.5$ and $\Delta_2 = 0.2$. In scenario 3, we assume the treatment only benefits $\mathcal{S}_1$ such that $\Delta_1 = 0.5$ and $\Delta_2 = 0$. In scenario 4, there is no benefit in either subpopulation, such that $\Delta_1 = \Delta_2 = 0$. In scenario 5, suppose the treatment benefits $\mathcal{S}_1$ but has side effects in $\mathcal{S}_2$, such that $\Delta_1 = 0.5$ and $\Delta_2 = -0.2$. In each scenario, the outcome means in treatment and control arms in $\mathcal{S}_m \; (m \in \{1,2\})$ are set to be $\mu_{m1} = \Delta_m$ and $\mu_{m1} = 0$, respectively. We set $\sigma^2 = 1$ and stage-wise sample size to be $n_1 = n_2 = 100$. In this setting, a standard fixed design with Type I error $\alpha = 0.05$ can achieve over $90\%$ power to reject the null hypothesis $\Delta_\mathcal{F} \leq 0$ under scenario 1. The threshold in the subpopulation selection rule is set to be $\Delta_* = 0.3$, which yields reasonable population selections across the five scenarios in the simulation. Under each scenario, we conducted 100,000 simulated trials. For each simulated trial, we selected the target population based on stage 1 data using rule $D_1$ and calculated the MLE, UMVCUE, and PiCE using \cref{MLE}, \cref{UMVCUE:expression}, and \cref{PiCE}, respectively. Note that, in the calculation of PiCE, only $\hat{\sigma}^2$ is used, to simulate the case when the variance is unknown. 

\begin{table}[!htbp] 
	\centering
	\begin{tabular}{l c c c c c}
		\hline
		Scenario & Method & Continue with $\mathcal{F}$ & Enrich to $\mathcal{S}_1 $ & Enrich to $\mathcal{S}_2 $ & Stop \\ \hline
		\multirow{4}{*}{\makecell[l]{Scenario 1: \\ $\Delta_1 = 0.5$  \\ $\Delta_2 = 0.5$ }} & Selection proportion & $84.17\%$ & $5.06\%$ & $5.03\%$ & $5.74\%$ \\
		& MLE & $28.71 \ (17.07)$ & $-17.36 \ (19.22)$ & $-22.33 \ (19.57) $ & $-$   \\
		& UMVCUE & $-0.05 \ (24.76)$ & $4.02 \ (37.43)$ & $-2.36 \ (38.08)$ & $-$   \\
		& PiCE & $-0.03 \ (24.75)$ & $4.01 \ (37.41)$ & $-2.34 \ (38.06)$ & $-$   \\ \hline
		\multirow{4}{*}{\makecell[l]{Scenario 2: \\ $\Delta_1 = 0.5$  \\ $\Delta_2 = 0.2$ }} & Selection proportion & $59.64\%$ & $21.13\%$ & $3.97\%$ & $15.27\%$ \\
		& MLE & $64.45  \ (18.33)$ & $-6.81 \ (19.81)$ & $66.83 \ (23.03) $ & $-$   \\
		& UMVCUE & $-0.15 \ (28.29)$ & $ -0.98 \ (37.26)$ & $-0.21 \ (38.70)$ & $-$   \\
		& PiCE  & $-0.05 \ (28.31)$ & $-0.95 \ (37.22)$ & $-0.15 \ (38.71)$ & $-$   \\ \hline
		\multirow{4}{*}{\makecell[l]{Scenario 3: \\ $\Delta_1 = 0.5$  \\ $\Delta_2 = 0$ }} & Selection proportion & $40.21\%$ & $37.45\%$ & $1.83\%$ & $20.51\%$ \\ 
		& MLE & $97.01 \ (22.43)$ & $5.02 \ (20.29)$ & $127.25 \ (34.65) $ & $-$   \\
		& UMVCUE & $1.09 \ (30.43)$ & $-0.50\ (36.04)$ & $-2.10 \ (38.73)$ & $-$   \\
		& PiCE & $1.19 \ (30.50)$ & $-0.48 \ (36.02)$ & $-2.12 \ (38.79)$ & $-$   \\ \hline
		\multirow{4}{*}{\makecell[l]{Scenario 4: \\ $\Delta_1 = 0$  \\ $\Delta_2 = 0$ }} & Selection proportion & $6.76 \%$ & $9.92\%$ & $10.14\%$ & $73.18\%$ \\
		& MLE & $195.07 \ (49.88)$ & $139.30 \ (38.36)$ & $138.89 \ (38.52) $ & $-$   \\
		& UMVCUE & $1.93 \ (35.88)$ & $ -0.14 \ (37.96)$ & $ -0.87 \ (38.36)$ & $-$   \\
		& PiCE & $ 1.58 \ (36.03)$ & $-0.25 \ (38.03)$ & $ -0.89 \ (38.42)$ & $-$   \\ \hline
		\multirow{4}{*}{\makecell[l]{Scenario 5: \\ $\Delta_1 = 0.5$  \\ $\Delta_2 = -0.2$ }} & Selection proportion & $22.88 \%$ & $53.62\%$ & $0.57\%$ & $22.93\%$ \\
		& MLE & $133.16 \ (30.09)$ & $16.90 \ (21.24)$ & $187.93 \ (53.43) $ & $-$   \\
		& UMVCUE & $0.45 \ (32.78)$ & $0.50 \ (34.93)$ & $-4.47 \ (38.76)$ & $-$   \\
		& PiCE & $0.40 \ (32.88)$ & $0.55 \ (34.98)$ & $-4.54 \ (38.80)$ & $-$   \\ \hline
	\end{tabular}
	\caption{Simulated subpopulation selection proportions, bias $\times 10^{3}$, and mean squared error $\times 10^{3}$ across five scenarios.}
	\label{tab:sim}
\end{table}

The simulation results are summarised in \cref{tab:sim}. In each scenario block, the first row indicates the proportion of subpopulation selections at the interim analysis. The following rows summarie the conditional bias and mean squared error (MSE) of MLE, UMVCUE, and PiCE given the selected population, where values in parentheses denote the MSE. Note that bias and MSE values are scaled by $10^3$ for clarity. Based on the subpopulation selection proportions, $D_1$ yields reasonable decisions. In scenario 1, $84.17\%$ of simulated trials continue with the overall population $\mathcal{F}$. In scenario 2, an appreciable fraction $(21.13\%)$ of trials are enriched to $\mathcal{S}_1$, though the majority of trials continue with $\mathcal{F}$ since the aggregated treatment effect $\Delta_F = 0.35$ is still considerable. In scenario 3, a greater fraction $(21.13\%)$ of trials are enriched to $\mathcal{S}_1$ with the increase of heterogeneity between $\mathcal{S}_1$ and $\mathcal{S}_2$. In scenario 4, where neither subpopulation benefit, $73.18\%$ of trials stop for futility at the interim. In scenario 5, when there is a side effect in $\mathcal{S}_2$, the majority of trials are enriched to $\mathcal{S}_1$.

In all scenarios, the bias of the UMVCUE is negligible, which confirms its conditional unbiasedness. In contrast, the MLE exhibits substantial bias, particularly when a mismatched subpopulation is selected, which is due to the selection bias induced by the data-dependent selection. For example, in scenario 4, the trials are supposed to stop for futility since there is no benefit in either $\mathcal{S}_1$ or $\mathcal{S}_2$. In this case, a trial continuing with the mismatched subpopulation indicates that the corresponding stage 1 sample mean difference in that subpopulation happens to be great and thus pass the selection, leading to a considerable overestimation of the treatment effect. While the UMVCUE entails a higher variance, leading to greater MSE values in most instances, the bias of the MLE can dominate (e.g. all population selections in scenario 4 and the enrichment to $\mathcal{S}_2$ in scenario 5), resulting a greater MSE than the UMVCUE. Across all scenarios, the PiCE shows similar empirical performance with the UMVCUE regarding bias and MSE. This suggests that, when the variance is unknown, the PiCE is a viable alternative of the UMVCUE. These results highlight that the UMVCUE and PiCE provide a more reliable estimation in post selection inference, especially in ensuring that treatment benefits are not overestimated.

\section{Discussion} \label{sec:discuss}
In this work, we have identified a critical research gap that the majority of point estimation methods for adaptive enrichment designs are tailored to specific subpopulation selection rules, limiting their general applicability to adaptive enrichment designs in practice. To bridge this gap, we propose a general class of subpopulation selection rules that encompasses the majority of those rules described in the current literature. For any adaptive enrichment design with selection rule in this class, we established a systematic framework that derives the Uniformly Minimum Variance Conditional Unbiased Estimator (UMVCUE). Our simulation confirms that the proposed UMVCUE is unbiased and effectively avoids overestimation on treatment effects compared to MLE. Based on the simulation results, when the outcome variance is unknown, the proposed plug-in conditional estimator provides a pragmatic alternative to the UMVCUE. By providing a unified expression of the UMVCUE, our work eliminates the necessity for case-specific derivations of point estimators for each new design. The proposed framework has the potential to facilitate the implementation of diverse subpopulation selection rules with greater ease in real-world adaptive trials and ultimately promote more efficient and ethical drug development.

We emphasize that the class $\mathcal{D}$ should be viewed as a structural characterization or a set of compatibility constraints on selection rules, rather than a prescriptive selection rule itself. Specifically, $\mathcal{D}$ provides a formal boundary to verify the eligibility of a given selection rule $D$ for our conditional point estimation framework. We do not intend for clinicians or biostatisticians to contrive their decision-making processes to satisfy the requirements of $\mathcal{D}$. Instead, we advocate that selection rules remain motivated by scientific and clinical considerations. Whenever such a scientifically-driven rule falls within $\mathcal{D}$, our framework can be applied to ensure rigorous post-selection inference.

Our current methodology is derived under normal distributional assumptions. For non-normal outcomes, the theoretical properties established above hold asymptotically, provided that, under standard regularity conditions, the treatment effect MLE is asymptotically normal as the stage-wise sample sizes tend to infinity. Beyond the two-stage case discussed here, the generalization to multi-stage configurations represents a natural direction for future research, but remains technically challenging and is left as an open problem.

In this paper, we have focused on conditional inference. The merits and drawbacks between conditional and unconditional inference lies beyond the scope of this paper. We refer readers to \citet{ohman2003conditional, marschner2021general} for a comprehensive discussion.

\section*{Acknowledgement}
This work was supported by the MRC Doctoral Training Partnership in Interdisciplinary Biomedical Research awarded to the University of Warwick (MR/W007053/1) and by Novartis. The authors thank Dominic Magirr for valuable discussions related to the ideas underlying this work.

\bibliographystyle{apalike}
\bibliography{ref}

\appendix
\crefalias{section}{appendix}
\crefalias{subsection}{appendix}

\crefname{appendix}{Appendix}{Appendices}
\Crefname{appendix}{Appendix}{Appendices}

\newpage
\section{Appendix}

\subsection{Proof of \cref{theorem}}  \label{proof}

\begin{proof}
	In the following derivations, where the design $D$ and the selection result $\mathcal{E}$ are fixed, we suppress the explicit dependence for brevity and write the induced boundaries, $L_D(\mathcal{E}, \boldsymbol{X}_{\mathcal{E}^\prime, 1})$ and $U_D(\mathcal{E}, \boldsymbol{X}_{\mathcal{E}^\prime, 1})$,  simply as $L$ and $U$.
	
	Consider the joint conditional distribution of $(X_{\mathcal{E},1}, X_{\mathcal{E},2}, \boldsymbol{X}_{\mathcal{E}^\prime, 1})$ given $D(\boldsymbol{X}_{\mathcal{K}, 1}) =\mathcal{E}$, which is
	\begin{align*}
		& f(x_{\mathcal{E}, 1}, x_{\mathcal{E}, 2}, \boldsymbol{x}_{\mathcal{E}^\prime, 1} \mid D(\boldsymbol{X}_{\mathcal{K}, 1}) =\mathcal{E}) \\
		= &  \frac{1}{v_{\mathcal{E}, 1}}\phi (\frac{x_{\mathcal{E}, 1} - \Delta_{\mathcal{E}}}{v_{\mathcal{E}, 1}})   \frac{1}{v_{\mathcal{E}, 2}} \phi (\frac{x_{\mathcal{E}, 2} - \Delta_{\mathcal{E}}}{v_{\mathcal{E}, 2}}) \prod_{i \in \mathcal{E}^\prime} \bigl[ \frac{1}{v_{i,1}} \phi(\frac{x_{i, 1} - \Delta_i}{v_{i,1}}) \bigr]  \frac{\mathds{1}(L<x_{\mathcal{E}, 1}<U)}{\Pr(L<X_{\mathcal{E}, 1}<U)}
	\end{align*}
	Note that, the following part of the expression above can be written as
	\begin{align*}
		& \phi (\frac{x_{\mathcal{E}, 1} - \Delta_{\mathcal{E}}}{v_{\mathcal{E}, 1}}) \phi (\frac{x_{\mathcal{E}, 2} - \Delta_{\mathcal{E}}}{v_{\mathcal{E}, 2}}) \\
		= & \phi([\frac{n_1 p_{\mathcal{E}}/(n_1 p_{\mathcal{E}} + n_2) x_{\mathcal{E}, 1}   + n_2/(n_1 p_{\mathcal{E}} + n_2)x_{\mathcal{E}, 2} ]  - \Delta_{\mathcal{E}}}{v_{\mathcal{E}, 1} v_{\mathcal{E}, 2}/ \sqrt{(v_{\mathcal{E}, 1})^2 + (v_{\mathcal{E}, 2})^2}})\\
		\times & \phi(\frac{x_{\mathcal{E}, 1}  - [n_1 p_{\mathcal{E}}/(n_1 p_{\mathcal{E}} + n_2) x_{\mathcal{E}, 1}   + n_2/(n_1 p_{\mathcal{E}} + n_2)x_{\mathcal{E}, 2}  ]}{(v_{\mathcal{E}, 1})^2 / \sqrt{(v_{\mathcal{E}, 1})^2 + (v_{\mathcal{E}, 2})^2}})
	\end{align*}
	Then, we have
	\begin{equation}
		\begin{aligned}
			& f(x_{\mathcal{E}, 1}, x_{\mathcal{E}, 2}, \boldsymbol{x}_{\mathcal{E}^\prime, 1} \mid D(\boldsymbol{X}_{\mathcal{K}, 1}) =\mathcal{E}) \\
			= & \phi([\frac{n_1 p_{\mathcal{E}}/(n_1 p_{\mathcal{E}} + n_2) x_{\mathcal{E}, 1}   + n_2/(n_1 p_{\mathcal{E}} + n_2)x_{\mathcal{E}, 2} ]  - \Delta_{\mathcal{E}}}{v_{\mathcal{E}, 1} v_{\mathcal{E}, 2}/ \sqrt{(v_{\mathcal{E}, 1})^2 + (v_{\mathcal{E}, 2})^2}})\\
			\times & \phi(\frac{x_{\mathcal{E}, 1}  - [n_1 p_{\mathcal{E}}/(n_1 p_{\mathcal{E}} + n_2) x_{\mathcal{E}, 1}   + n_2/(n_1 p_{\mathcal{E}} + n_2)x_{\mathcal{E}, 2}  ]}{(v_{\mathcal{E}, 1})^2 / \sqrt{(v_{\mathcal{E}, 1})^2 + (v_{\mathcal{E}, 2})^2}}) \\
			\times &   \frac{1}{v_{\mathcal{E}, 1}}  \frac{1}{v_{\mathcal{E}, 2}} \prod_{i \in \mathcal{E}^\prime} \bigl[ \frac{1}{v_{i,1}} \phi(\frac{x_{i, 1} - \Delta_i}{v_{i,1}}) \bigr]  \frac{\mathds{1}(L<x_{\mathcal{E}, 1}<U)}{\Pr(L<X_{\mathcal{E}, 1}<U)}
		\end{aligned}
		\label{eq:3}
	\end{equation}
	Note that $f(x_{\mathcal{E}, 1}, x_{\mathcal{E}, 2}, \boldsymbol{x}_{\mathcal{E}^\prime, 1} \mid D(\boldsymbol{X}_{\mathcal{K}, 1}) =\mathcal{E})$ belongs to the exponential family and is full rank with representation \cref{eq:3} (See Section 5, Chapter 1 of \citet{lehmann1998theory}). We have $X_\mathcal{E} = n_1 p_{\mathcal{E}}/(n_1 p_{\mathcal{E}} + n_2) X_{\mathcal{E}, 1}   + n_2/(n_1 p_{\mathcal{E}} + n_2)X_{\mathcal{E}, 2}$. Hence, $(X_\mathcal{E}, \boldsymbol{X}_{\mathcal{E}^\prime, 1})$ is complete sufficient statistic for $(\Delta_{\mathcal{E}}, \boldsymbol{\Delta}_{\mathcal{E}^\prime})$ (See Corollary 6.16 and Theorem 6.22 in Chapter 1 of \citet{lehmann1998theory}). Since $\E[X_{\mathcal{E},2} \mid D(\boldsymbol{X}_{\mathcal{K}, 1}) =\mathcal{E}] = \Delta_{\mathcal{E}}$, by Rao-Blackwell Theorem \citep{rao1945information, blackwell1947conditional} (See Theorem 7.8 in Chapter 1 of \citet{lehmann1998theory}), we have $\E[X_{\mathcal{E},2} \mid X_\mathcal{E}, \boldsymbol{X}_{\mathcal{E}^\prime, 1}, D(\boldsymbol{X}_{\mathcal{K}, 1}) =\mathcal{E}] = \Delta_{\mathcal{E}}$. Based on Lehmann–Scheffé Theorem \citep{lehmann2011completeness} (See Theorem 1.11 in Chapter 2 of \citep{lehmann1998theory}), $\E[X_{\mathcal{E},2} \mid X_\mathcal{E}, \boldsymbol{X}_{\mathcal{E}^\prime, 1}, D(\boldsymbol{X}_{\mathcal{K}, 1}) =\mathcal{E}]$ is the unique UMVCUE of $\Delta_{\mathcal{E}}$. In the following, we derive the expression of $\E[X_{\mathcal{E},2} \mid X_\mathcal{E}, \boldsymbol{X}_{\mathcal{E}^\prime, 1}, D(\boldsymbol{X}_{\mathcal{K}, 1}) =\mathcal{E}]$.
	
	First, we can obtain the conditional distribution of $(X_{\mathcal{E},1}, X_{\mathcal{E}}, \boldsymbol{X}_{\mathcal{E}^\prime, 1})$ given $D(\boldsymbol{X}_{\mathcal{K}, 1}) =\mathcal{E}$ by transforming $(X_{\mathcal{E},1}, X_{\mathcal{E},2}, \boldsymbol{X}_{\mathcal{E}^\prime, 1})$ to $(X_{\mathcal{E},1}, X_{\mathcal{E}}, \boldsymbol{X}_{\mathcal{E}^\prime, 1})$ in \cref{eq:3} as
	\begin{align*}
		& f(x_{\mathcal{E}, 1}, x_{\mathcal{E}}, \boldsymbol{x}_{\mathcal{E}^\prime, 1} \mid D(\boldsymbol{X}_{\mathcal{K}, 1}) =\mathcal{E}) \\
		= & \frac{1}{v_\mathcal{E}} \phi(\frac{x_\mathcal{E} - \Delta_{\mathcal{E}}}{v_\mathcal{E}}) \frac{1}{\sqrt{1/r_\mathcal{E}} v_\mathcal{E} }  \phi(\frac{x_{\mathcal{E}, 1} - x_{\mathcal{E}}}{\sqrt{1/r_\mathcal{E}} v_\mathcal{E}}) \prod_{i \in \mathcal{E}^\prime} \bigl[ \frac{1}{v_{i,1}} \phi(\frac{x_{i, 1} - \Delta_i}{v_{i,1}}) \bigr]  \frac{\mathds{1}(L<x_{\mathcal{E}, 1}<U)}{\Pr(L<X_{\mathcal{E}, 1}<U)}
	\end{align*}
	Then, we can obtain the marginally conditional distribution of $(X_{\mathcal{E}}, \boldsymbol{X}_{\mathcal{E}^\prime, 1})$ given $D(\boldsymbol{X}_{\mathcal{K}, 1}) =\mathcal{E}$ by integrating out $X_{\mathcal{E},1}$ in the above equation as
	\begin{align*}
		& f(x_{\mathcal{E}}, \boldsymbol{x}_{\mathcal{E}^\prime, 1} \mid D(\boldsymbol{X}_{\mathcal{K}, 1}) =\mathcal{E}) \\
		= & \int_{L}^{U} f(x_{\mathcal{E}, 1}, x_{\mathcal{E}}, \boldsymbol{x}_{\mathcal{E}^\prime, 1} \mid D(\boldsymbol{X}_{\mathcal{K}, 1}) =\mathcal{E}) \dif x_{\mathcal{E}, 1} \\
		= &  \frac{1}{v_\mathcal{E}} \phi(\frac{x_\mathcal{E} - \Delta_{\mathcal{E}}}{v_\mathcal{E}}) [  \Phi(\frac{U - x_{\mathcal{E}}}{\sqrt{1/r_\mathcal{E}} v_\mathcal{E}}) - \Phi(\frac{L- x_{\mathcal{E}}}{\sqrt{1/r_\mathcal{E}} v_\mathcal{E}}) ]\prod_{i \in \mathcal{E}^\prime} \bigl[ \frac{1}{v_{i,1}} \phi(\frac{x_{i, 1} - \Delta_i}{v_{i,1}}) \bigr]  \frac{\mathds{1}(L<x_{\mathcal{E}, 1}<U)}{\Pr(L<X_{\mathcal{E}, 1}<U)}
	\end{align*}
	Using an analogous procedure aforementioned, we can transfer $(X_{\mathcal{E},1}, X_{\mathcal{E},2}, \boldsymbol{X}_{\mathcal{E}^\prime, 1})$ to $(X_{\mathcal{E},2}, X_{\mathcal{E}}, \boldsymbol{X}_{\mathcal{E}^\prime, 1})$ in \cref{eq:3} to obtain
	\begin{align*}
		& f(x_{\mathcal{E}, 2}, x_{\mathcal{E}}, \boldsymbol{x}_{\mathcal{E}^\prime, 1} \mid D(\boldsymbol{X}_{\mathcal{K}, 1}) =\mathcal{E}) \\
		= & \frac{1}{v_\mathcal{E}} \phi(\frac{x_\mathcal{E} - \Delta_{\mathcal{E}}}{v_\mathcal{E}}) \frac{1}{\sqrt{r_\mathcal{E}} v_\mathcal{E} }  \phi(\frac{x_{\mathcal{E}, 2} - x_{\mathcal{E}}}{\sqrt{r_\mathcal{E}} v_\mathcal{E}}) \prod_{i \in \mathcal{E}^\prime} \bigl[ \frac{1}{v_{i,1}} \phi(\frac{x_{i, 1} - \Delta_i}{v_{i,1}}) \bigr]  \frac{\mathds{1}(L<x_{\mathcal{E}, 1}<U)}{\Pr(L<X_{\mathcal{E}, 1}<U)}
	\end{align*}
	Then, we can obtain
	\begin{align*}
		& f(x_{\mathcal{E}, 2} \mid X_{\mathcal{E}}, \boldsymbol{X}_{\mathcal{E}^\prime, 1} D(\boldsymbol{X}_{\mathcal{K}, 1}) =\mathcal{E}) \\
		= & \frac{f(x_{\mathcal{E}, 2}, x_{\mathcal{E}}, \boldsymbol{x}_{\mathcal{E}^\prime, 1} \mid  D(\boldsymbol{X}_{\mathcal{K}, 1}) =\mathcal{E})}{f(x_{\mathcal{E}}, \boldsymbol{x}_{\mathcal{E}^\prime, 1} \mid  D(\boldsymbol{X}_{\mathcal{K}, 1}) =\mathcal{E})} \\
		= & \frac{1}{\sqrt{r_\mathcal{E}} v_\mathcal{E} }  \phi(\frac{x_{\mathcal{E}, 2} - x_{\mathcal{E}}}{\sqrt{r_\mathcal{E}} v_\mathcal{E}}) \frac{1}{\Phi(\frac{U - x_{\mathcal{E}}}{\sqrt{1/r_\mathcal{E}} v_\mathcal{E}}) - \Phi(\frac{L - x_{\mathcal{E}}}{\sqrt{1/r_\mathcal{E}} v_\mathcal{E}})}
	\end{align*}
	Given $X_\mathcal{E}$ and $ L<X_{\mathcal{E}, 1} < U$, we can obtain the support of $X_{\mathcal{E}, 2}$, given by $(1+r_\mathcal{E}) X_\mathcal{E} - r_\mathcal{E} U <X_{\mathcal{E}, 2} < (1+r_\mathcal{E}) X_\mathcal{E} - r_\mathcal{E} L$. Then, using the fact that
	\begin{equation*}
		\int_{a}^{b} x\frac{1}{\sigma}\phi(\frac{x-\mu}{\sigma}) \mathrm{d}x = -\sigma\phi(\frac{b-\mu}{\sigma}) + \sigma\phi(\frac{a-\mu}{\sigma}) + \mu \Phi(\frac{b-\mu}{\sigma}) - \mu \Phi(\frac{a-\mu}{\sigma}),
	\end{equation*}
	we can obtain the conditional expectation $\E[X_{\mathcal{E},2} \mid X_\mathcal{E}, \boldsymbol{X}_{\mathcal{E}^\prime, 1}, D(\boldsymbol{X}_{\mathcal{K}, 1}) =\mathcal{E}]$ such that
	\begin{align*}
		& \E[X_{\mathcal{E},2} \mid X_\mathcal{E}, \boldsymbol{X}_{\mathcal{E}^\prime, 1}, D(\boldsymbol{X}_{\mathcal{K}, 1}) =\mathcal{E}] \\
		= & \int_{(1+r_\mathcal{E}) X_\mathcal{E} - r_\mathcal{E} U}^{(1+r_\mathcal{E}) X_\mathcal{E} - r_\mathcal{E} L} f(x_{\mathcal{E}, 2} \mid X_{\mathcal{E}}, \boldsymbol{X}_{\mathcal{E}^\prime, 1}, D(\boldsymbol{X}_{\mathcal{K}, 1}) =\mathcal{E}) \dif x_{\mathcal{E}, 2} \\
		= & X_{\mathcal{E}} +  v_\mathcal{E}  \sqrt{r_{\mathcal{E}}}  \frac{\phi[\frac{\sqrt{r_{\mathcal{E}}}}{v_{\mathcal{E}}} (U_D(\mathcal{E}, \boldsymbol{X}_{\mathcal{E}^\prime, 1}) - X_{\mathcal{E}} )] - \phi[\frac{\sqrt{r_{\mathcal{E}}}}{v_{\mathcal{E}}} (L_D(\mathcal{E}, \boldsymbol{X}_{\mathcal{E}^\prime, 1}) - X_{\mathcal{E}} )]}{\Phi[\frac{\sqrt{r_{\mathcal{E}}}}{v_{\mathcal{E}}} (U_D(\mathcal{E}, \boldsymbol{X}_{\mathcal{E}^\prime, 1}) - X_{\mathcal{E}} )] - \Phi[\frac{\sqrt{r_{\mathcal{E}}}}{v_{\mathcal{E}}}(L_D(\mathcal{E}, \boldsymbol{X}_{\mathcal{E}^\prime, 1}) - X_{\mathcal{E}} )]}
	\end{align*}
\end{proof}

\clearpage

\renewcommand\thefigure{\arabic{figure}}    
\setcounter{figure}{0}


\end{document}